\newtheorem{theorem}{Theorem}[section]
\newtheorem{lemma}{Lemma}[section]
\newtheorem{cor}{Corollary}[section]
\newtheorem{assumption}{Assumption}[section]
\numberwithin{equation}{section}
\numberwithin{theorem}{section}
\numberwithin{lemma}{section}
\numberwithin{pro}{section}
\numberwithin{cor}{section}
\numberwithin{definition}{section}
\numberwithin{cons}{section}
\numberwithin{rem}{section}
\numberwithin{exa}{section}
\numberwithin{table}{section}
\numberwithin{figure}{section}
\newcommand{\intt}{\int\hspace{-.2cm}\int}
\def\beq{\begin{equation}}
\def\eeq{\end{equation}}
\def\bals{\begin{align*}}
\def\eals{\end{align*}}
\def\bal{\begin{align}}
\def\eal{\end{align}}
\begin{document}

\title{Structural break analysis for \\ spectrum and trace of covariance operators\footnote{This research was partially supported by NSF grant DMS 1407530, and the NSERC Discovery and Accelerator grant
}}

\author{
Alexander Aue\footnote{Department of Statistics, University of California, Davis, CA 95616, USA, email: \tt{[aaue,osonmez]@ucdavis.edu}}
\and Gregory Rice\footnote{Department of Statistics and Actuarial Science, University of Waterloo, Waterloo, ON, Canada, email: \tt{grice@uwaterloo.ca}}
\and Ozan S\"onmez$^\dagger$
}
\date{\today}
\maketitle

\begin{abstract}
\setlength{\baselineskip}{1.8em}
This paper deals with analyzing structural breaks in the covariance operator of sequentially observed functional data. For this purpose, procedures are developed to segment an observed stretch of curves into periods for which second-order stationarity may be reasonably assumed. The proposed methods are based on measuring the fluctuations of  sample eigenvalues, either individually or jointly, and traces of the sample covariance operator computed from segments of the data. To implement the tests, new limit results are introduced that deal with the large-sample behavior of vector-valued processes built from partial sample eigenvalue estimates. These results in turn enable the calibration of the tests to a prescribed asymptotic level. A simulation study and an application to Australian annual minimum temperature curves confirm that the proposed methods work well in finite samples. The application suggests that the variation in annual minimum temperature underwent a structural break in the 1950s, after which typical fluctuations from the generally increasing trendstarted to be significantly smaller.
\medskip \\
\noindent {\bf Keywords:} Annual temperature profiles; Change-point analysis; Functional data; Functional principal components; Functional time series; Structural breaks

\noindent {\bf MSC 2010:} Primary: 62G99, 62H99, Secondary: 62M10, 91B84
\end{abstract}

\setlength{\baselineskip}{1.8em}

\section{Introduction}
\label{sec:intro}

In functional data analysis, a natural way to measure the variability of a sample is through the covariance operator and its eigenvalues. This basic idea motivates perhaps the most widely used tool in the analysis of functional data: functional principal component analysis (FPCA). FPCA entails projecting functional observations into a lower dimensional space spanned by a few functional principal components computed as eigenfunctions of an empirical covariance operator. Typically a small number of projections account for a large percentage of sample variation, often measured as size of the corresponding eigenvalues of the empirical covariance operator relative to its trace. When functional data are obtained via randomized experiments, it is reasonable to assume that the covariance structure is homogeneous throughout the sample, and, in this case, the principal components and spectra computed from the sample covariance operator correspond to population quantities with well-known optimality properties for dimension reduction. The interested reader is referred to Ramsay and Silverman (2005), Ferraty and Vieu (2006), and Horv\'ath and Kokoszka (2013) for text book treatments of functional data analysis, and to 
Shang (2014) for a survey of FPCA.

Frequently, however, functional data are obtained not by simple random sampling, but rather sequentially. A common example is the generation of functional observations by parsing long, dense records of a continuous time phenomenon, such as historical temperature data, into a functional time series, such as annual temperature profiles; see Aue et al.\ (2018), Aue and van Delft (2018), and van Delft et al.\ (2018). Sequences of the same kind also arise from sequential observations of other complex functional phenomena, such as functional magnetic resonance imaging and DNA minicircle evolution; see Aston and Kirch (2012) and Panaretos et al.\ (2010), respectively. Consequently, functional data often naturally display time series characteristics.


With such sequences of functional observations it is also often evident that their variability is not stable throughout the entire sample, rather they exhibit several periods of distinct levels and fluctuations. Providing a mechanism to identify data segments for which variability can be assumed stable is useful for several reasons. First, FPCA based analyses using the entire sample might be misleading in the presence of inhomogeneity in the variability in that either (1) the basis computed from the sample covariance operator may not be estimating the optimal basis for dimension reduction, and/or (2) statistics used to determine how many principal components to use, often based on sample eigenvalue estimates, may not perform as expected. As a result too few or many principal components could be considered in subsequent analyses. Breaks in the variability, as measured by eigenvalues, might also be of independent interest since they may signal a relevant change to the system under study. An example is given by structural breaks in the variability of annual minimum temperature curves constructed from historical records in Australia. It is seen below that after the removal of an increasing trend curve, variability begins to decrease in the 1950s. Methods for identifying and pinpointing the nature of such structural breaks in functional data, and further giving statistical significance to such findings, have not been developed, to the best of our knowledge.

In this paper,  tests for the constancy of the largest $d$ eigenvalues and trace of the empirical covariance operator of a functional time series are proposed and studied. The tests are based on comparing maximally selected quadratic forms derived from partial sample estimates of the eigenvalues of the covariance operator to the quantiles of their limiting distribution under the hypothesis that the sample is taken from a weakly dependent functional time series. This asymptotic result follows from a weak invariance principle for the vector-valued process of partial sample eigenvalue estimates that might be of independent interest.

This work is inspired by, and builds upon, a number of recent contributions in both the probability and statistics literature. In the setting of separable Hilbert space-valued random variables, Mas (2002) and Mas and Menneteau (2003) showed via perturbation theory that the central limit theorem, law of large numbers, and law of iterated logarithm hold for the spectra if analogous results can be established for the operators themselves. Kokoszka and Reimherr (2013) showed, under conditions similar to those used here, that eigenfunctions of sample covariance operators of weakly dependent functional time series are asymptotically Gaussian. Beran and Liu (2016) established the central limit theorem for the eigenvalue and eigenfunction estimates in functional data models under both short- and long-memory error conditions. With the goal of performing structural break analysis with finite-dimensional time series,  Aue et al.\ (2009) established a weak invariance principle for the process partial sample estimates of the covariance matrix, which was applied do derive structural break tests for the second-order structure of a vector-valued time series. Their results were extended to include strong approximations for partial sample spectra and principal components in Kao et al. (2018), which may be viewed as a finite-dimensional counterpart to this paper. Horv\'ath and Rice (2018+) considered similar methods in the context of high-dimensional linear factor models.  

There are several recent papers on two-sample and analysis of variance problems for functional data relevant to the present work.  Most closely related is Jaru\v{s}kova (2013), who developed two-sample and structural break tests for the covariance operator of independent, identically distributed functional data based on principal component projections, and Zhang and Shao (2015), who considered a two-sample test for the covariance operator of dependent functional data based on self-normalized statistics derived from eigenvalue estimates. Beran et al.\ (2016) developed a two-sample test for the equivalence of eigenspaces of two-sample covariance operators, while Pigoli et al.\ (2014) considered various metrics and two-sample tests for covariance operators. Boente et al.\ (2010) developed multi-sample tests under a common principal component assumption. Finally Fremdt et al.\ (2012) and Panaretos et al.\ (2010) studied two-sample tests for the second-order structure of Gaussian functional data.

The rest of the paper is organized as follows. In Section \ref{main}, the basic problem is formalized, and assumptions and asymptotic results for partial sample eigenvalue estimates are detailed. Applications of these results to test for the constancy of the eigenvalues of the covariance operator are developed in Section~\ref{change-sec}. The findings of a Monte-Carlo simulation study are presented in Section \ref{sim}, while the outcomes of an application to annual minimum temperature curves are reported in Section~\ref{data}. All procedures are implemented in the R package {\tt fChange} (see S\"onmez et al., 2017), which may be downloaded from the CRAN website. Section \ref{sec:conc} concludes. Technical details and proofs are collected in Appendices~\ref{proofs} and \ref{var-est}. Below, the following notations are used. Write $L^2([0,1]^\ell)$ for the space of square-integrable, real-valued functions defined on $[0,1]^\ell$. Let $\|\cdot\|$ denote the standard norm on $L^2([0,1]^\ell)$ induced by the inner product $\langle\cdot,\cdot\rangle$, the dimension $\ell$ being clear from the input function. The notation $\int$ may be used in place of $\int_0^1$, and $(y_k)$ short for a sequence $(y_i\colon i\in \mathbb{Z})$ indexed by the integers $\mathbb{Z}$.

\section{Framework} \label{main}

Suppose that functional observations $X_1,\ldots,X_n$ are generated by the model
\begin{align}\label{model-1}
X_i(t)= \mu(t) + \varepsilon_i(t),\qquad t\in[0,1],\; i\in\mathbb{Z},
\end{align}
where $\mu$ denotes the common mean function of the $X_i$ and $(\varepsilon_i\colon i\in\mathbb{Z})$ a sequence of centered error functions treated as stochastic processes with sample paths in $L^2([0,1])$. 
In order to solidify concepts, assume that $\mathbb{E}[\|X_i\|^2] < \infty$, and let
\begin{align*}
C^{(i)}(t,t^\prime) = \mbox{Cov}(X_i(t),X_i(t^\prime)),
\qquad t,t^\prime\in[0,1],
\end{align*}
denote the covariance kernel of $X_i$. On $L^2([0,1])$, $C^{(i)}$ defines the symmetric and positive definite Hilbert--Schmidt integral operator $c^{(i)}$ given by
$$
c^{(i)}(f)(t) = \int C^{(i)}(t,t^\prime)f(t^\prime)dt^\prime, \qquad f\in L^2([0,1]),
$$
whose eigenfunctions $\varphi_j^{(i)}$ are commonly termed the principal components of the process $X_i$. The associated nonnegative, real, and ordered eigenvalues $\lambda_j^{(i)}$ define the ``variance explained" by successive principal components. Given a sample $X_1,\ldots,X_n$ following \eqref{model-1}, one often wishes to estimate these principal components and eigenvalues in order to perform dimension reduction. Under the assumption that the sequence $(X_i)$ is strictly stationary, which in light of \eqref{model-1} is equivalent with the strict stationarity of the errors $(\varepsilon_i)$, it follows that $C^{(i)}=C$ for all $i$, where $C(t,t^\prime) = \mbox{Cov}(X_0(t),X_0(t^\prime))$. Similarly, $\varphi_j^{(i)}=\varphi_j$ and $\lambda_j^{(i)}=\lambda_j$. These common principal components may be estimated using the sample covariance kernel
\begin{align*}
\hat{C}(t,t^\prime) = \frac{1}{n} \sum_{i=1}^{n} (X_i(t)- \bar{X}(t))(X_i(t^\prime) - \bar{X}(t^\prime)),
\qquad t,t^\prime\in[0,1],
\end{align*}
where $\bar{X}= n^{-1} \sum_{i=1}^n X_i$, which in turn yields estimates $\hat{\lambda}_j$ and $\hat{\varphi}_j$ as solutions to the equations
\begin{align}\label{eigen-nai}
\hat{\lambda}_j \hat{\varphi}_j(t) = \int  \hat{C}(t,t^\prime)\hat{\varphi}_j(t^\prime)dt^\prime,
\qquad t\in[0,1].
\end{align}
A potential issue with this approach arises as follows: if the errors in \eqref{model-1} are non-stationary, for instance if their covariance $C^{(i)}$ changes within the sample, then principal components and eigenvalues defined in \eqref{eigen-nai} may not lead to optimal dimension reduction and/or summaries of variability. Defining
$$
{\bf \Lambda}_d^{(i)} = (\lambda_1^{(i)},\ldots,\lambda_d^{(i)})^\top \in \mathbb{R}^d,
$$
with $^\top$ signifying transposition, the foregoing motivates to study the null hypothesis
$$
H_0\colon {\bf \Lambda}_d^{(1)} = \cdots = {\bf \Lambda}_d^{(n)}
$$
and the alternative
$$
H_A\colon {\bf \Lambda}_d^{(1)} = \cdots =  {\bf \Lambda}_d^{(k^*)} \ne  {\bf \Lambda}_d^{(k^*+1)}=\cdots  = {\bf \Lambda}_d^{(n)},
$$
where $k^*=\lfloor \tau n \rfloor$, with $\tau \in (0,1)$. The alternative hypothesis $H_A$ describes the situation in which there is a structural break in the $d$ largest eigenvalues taking place at the unknown break point $k^*$. In order to test $H_0$, consider partial sample estimates of $C$ given by
\begin{align}\label{c-par-def}
\hat{C}_x(t,t^\prime)
= \frac{1}{n} \sum_{i=1}^{\lfloor nx \rfloor} (X_i(t)- \bar{X}(t))(X_i(t^\prime) - \bar{X}(t^\prime)),
\qquad x;t,t^\prime\in[0,1].
\end{align}
The estimate $\hat{C}_x$ may be used to define a partial sample estimate of $c$ as
\begin{align}\label{c-opt-par-def}
\hat{c}_x(f)(t) = \int \hat{C}_x(t,s)f(t^\prime)dt^\prime,
\qquad x;t\in[0,1].
\end{align}
For $x\in[0,1]$, let $\hat{\lambda}_j(x)$ denote the ordered eigenvalues of $\hat{c}_x$ with corresponding orthonormal eigenfunctions $\hat{\varphi}_{j,x}$. Throughout, the following assumptions will be invoked regarding strict stationarity of the underlying functional time series, the level of serial dependence between successive functions in the sample, and the spacing of the population eigenvalues $(\lambda_j\colon j\in\mathbb{N})$.

\begin{assumption}\label{edep}
It is assumed that

(a) there is a measurable function $g\colon S^\infty\to L^2([0,1])$, where $S$ is a measurable space, and independent, identically distributed (iid) innovations $(\epsilon_i\colon i\in\mathbb{Z})$ taking values in $S$ such that $\varepsilon_i=g(\epsilon_i,\epsilon_{i-1},\ldots)$ for $i\in\mathbb{Z}$;

(b) there are $\ell$-dependent sequences $(\varepsilon_{i,\ell}\colon i\in\mathbb{Z})$ such that, for some $p>4$,
\[
\sum_{\ell=0}^\infty\big(\mathbb{E}[\|\varepsilon_i-\varepsilon_{i,\ell}\|^p]\big)^{1/p}<\infty,
\]
where $\varepsilon_{i,\ell}=g(\epsilon_i,\ldots,\epsilon_{i-\ell+1},\epsilon^*_{i,\ell,i-\ell},\epsilon^*_{i,\ell,i-\ell-1},\ldots)$ with $\epsilon^*_{i,\ell,j}$ being independent copies of $\epsilon_{i,0}$ independent of $(\epsilon_i\colon i\in\mathbb{Z})$.
\end{assumption}

Assumption \ref{edep}(a) implies that $(\varepsilon_i)$ is strictly stationary, and hence that $H_0$ holds. Processes satisfying Assumption \ref{edep}(b) were termed $L^p$-$m$-approximable processes by H\"ormann and Kokoszka (2010), and cover most stationary functional time series models of interest, including functional AR and ARMA processes (see Aue et al.\ 2015; and Bosq, 2000). It is assumed that the underlying error innovations $(\epsilon_i)$ are elements of an arbitrary measurable space $S$. However, in many examples $S$ is itself a function space, and the evaluation of $ g(\epsilon_{i},\epsilon_{i-1},\ldots)$ is a functional of $(\epsilon_j\colon j\leq i)$. In order to obtain a normal approximation for the sample eigenvalues of $\hat{c}$, one must assume at least $p=4$ moments for the norm of the observations, and so our assumption of $p>4$ is nearly optimal in this sense.

\begin{assumption}\label{d-lam}
There exists an integer $d\ge1$ such that $\lambda_1 > \cdots > \lambda_d > \lambda_{d+1} \ge 0$.
\end{assumption}

Assumption \ref{d-lam} is standard in the FPCA literature. It ensures that eigenspaces belonging to the $d$ largest eigenvalues of $c$ are one-dimensional and that $\min_{1 \le i \le d} (\lambda_i - \lambda_{i+1})$ is bounded away from zero. Under $H_0$, denote the vector of the $d$ largest eigenvalues of $c$ by
$$
{\bf \Lambda}_d = (\lambda_1,\ldots,\lambda_d)^\top \in \mathbb{R}^d.
$$
To consider tests based on the  vector of partial sample estimates of ${\bf \Lambda}_d$, define
$$
\hat{{\bf \Lambda}}_d(x) = ( \hat{\lambda}_1(x),\ldots,\hat{\lambda}_d(x))^\top,
\qquad x\in[\delta,1],
$$
and note that this gives rise to the process $(\hat{{\bf \Lambda}}_d(x)\colon x\in[\delta,1])$ living in $D^d[\delta,1]$, the $d$-dimensional Skorohod space on the interval $[\delta,1]$ with some $\delta\in (0,1)$; see Chapter 3 of Billingsley (1968). Let
\begin{align}\label{theta-def}
\theta_{i,j} = \langle\varepsilon_i\otimes\varepsilon_i-\mathbb{E}[\varepsilon_0\otimes\varepsilon_0],
\varphi_j\otimes\varphi_j\rangle,
\qquad i=1,\ldots,n;\; j=1,\ldots,d.
\end{align}
The following theorem establishes the asymptotic properties of a suitably normalized version of the process $(\hat{{\bf \Lambda}}_d(x)\colon x\in[\delta,1])$.

\begin{theorem}\label{thm-joint}
If model \eqref{model-1} and Assumptions \ref{edep} and \ref{d-lam} hold, then for $\delta\in(0,1)$,
$$
\sqrt{n}\left(\hat{{\bf \Lambda}}_d(x) - \frac{\lfloor n x \rfloor}{n} {\bf \Lambda}_d\colon x\in[\delta,1] \right) 
{\Longrightarrow} \big(\Sigma_d^{1/2}{\bf W}^{(d)}(x)\colon x\in[\delta,1]\big)
\qquad(n\to\infty),
$$
where $\Longrightarrow$ denotes weak convergence in $D^d[\delta,1]$, $({\bf W}^{(d)}(x)\colon x\in[0,1])$ a standard $d$-dimensional Brownian motion, and $\Sigma_d$ a $d\times d$ covariance matrix with entries
$$
\Sigma_d(j,j^\prime)=
\sum_{i=-\infty}^\infty \mathrm{Cov}(\theta_{0,j},\theta_{i,j^\prime}),
\qquad j,j^\prime=1,\ldots,d.
$$
\end{theorem}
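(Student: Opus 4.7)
The plan is to reduce the theorem to a multivariate functional central limit theorem (FCLT) for a sequence of real-valued random vectors, via a uniform linearization based on spectral perturbation theory. Write $k=\lfloor nx\rfloor$ and $\bar c_k = (k/n)^{-1}\hat c_x$ for the ordinary sample covariance operator built from $X_1,\ldots,X_k$, so that $\hat\lambda_j(x) = (k/n)\tilde\lambda_j(x)$, where $\tilde\lambda_j(x)$ is the $j$-th eigenvalue of $\bar c_k$. Under Assumption \ref{d-lam}, each $\lambda_j$ with $j\le d$ is simple with a spectral gap bounded below, and the classical resolvent-based perturbation expansion (Dauxois, Pousse and Romain 1982; Mas 2002) gives
$$
\tilde\lambda_j(x) - \lambda_j = \langle(\bar c_k - c)\varphi_j,\varphi_j\rangle + R_{n,j}(x),
\qquad
|R_{n,j}(x)| \le K_j\|\bar c_k - c\|_{HS}^{2},
$$
with $K_j$ depending only on the spectral gap. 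Combined with the uniform-in-$x$ bound $\sup_{x\in[\delta,1]}\|\bar c_k - c\|_{HS}=O_P(n^{-1/2})$ (discussed at the end), this yields, uniformly in $x\in[\delta,1]$,
$$
\sqrt n\Big\{\hat\lambda_j(x) - \tfrac{k}{n}\lambda_j\Big\}
= \sqrt n\big\langle\hat c_x - \tfrac{k}{n}c,\varphi_j\otimes\varphi_j\big\rangle + o_P(1).
$$

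Next I rewrite the leading term in terms of the scores $\theta_{i,j}$ from \eqref{theta-def}. Setting $\eta_i=\langle\varepsilon_i,\varphi_j\rangle$ and expanding $(X_i-\bar X)\otimes(X_i-\bar X)=(\varepsilon_i-\bar\varepsilon)\otimes(\varepsilon_i-\bar\varepsilon)$ in the definition of $\hat C_x$ in \eqref{c-par-def}, one obtains, with $S_k^{(\eta)}=\sum_{i=1}^k\eta_i$,
$$
\langle\hat c_x,\varphi_j\otimes\varphi_j\rangle
= \frac{1}{n}\sum_{i=1}^{k}\eta_i^2 - \frac{2}{n^2}S_k^{(\eta)}S_n^{(\eta)} + \frac{k}{n^3}\big(S_n^{(\eta)}\big)^2.
$$
Since $\mathbb E[\eta_0^2]=\lambda_j$ and the scalar sequence $(\eta_i)$ inherits $L^p$-$m$-approximability from $(\varepsilon_i)$ as a bounded linear functional, the scalar FCLT gives $\sup_{k\le n}|S_k^{(\eta)}|=O_P(\sqrt n)$, so the last two summands contribute $O_P(n^{-1})$ uniformly in $x$, which is $\sqrt n$-negligible. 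Hence, uniformly in $x\in[\delta,1]$,
$$
\sqrt n\big\langle\hat c_x - \tfrac{k}{n}c,\varphi_j\otimes\varphi_j\big\rangle
= \frac{1}{\sqrt n}\sum_{i=1}^{k}\theta_{i,j} + o_P(1).
$$
Using $\|a\otimes a - b\otimes b\|_{HS}\le\|a-b\|(\|a\|+\|b\|)$ together with H\"older, the operator-valued sequence $\varepsilon_i\otimes\varepsilon_i-\mathbb E[\varepsilon_0\otimes\varepsilon_0]$ inherits $L^{p/2}$-$m$-approximability from Assumption \ref{edep}, and projecting onto $\varphi_j\otimes\varphi_j$ transfers the property to each real sequence $(\theta_{i,j})$. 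Since $p/2>2$, the multivariate invariance principle for $L^{p/2}$-$m$-approximable sequences (cf.\ H\"ormann and Kokoszka 2010; Berkes, Horv\'ath and Kokoszka 2011) yields
$$
\bigg(\frac{1}{\sqrt n}\sum_{i=1}^{\lfloor nx\rfloor}(\theta_{i,1},\ldots,\theta_{i,d})^\top\colon x\in[0,1]\bigg) \Longrightarrow \Sigma_d^{1/2}\mathbf W^{(d)}(\cdot),
$$
with the stated $\Sigma_d$, and together with the preceding reduction this gives the claim on $[\delta,1]$.

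The principal technical hurdle is the uniform bound $\sup_{x\in[\delta,1]}\|\bar c_k - c\|_{HS}=O_P(n^{-1/2})$ needed to control $R_{n,j}$; a pointwise strong law is not enough. I would obtain it by establishing a Hilbert--Schmidt-valued invariance principle for the partial sum process $\sum_{i\le k}\big(\varepsilon_i\otimes\varepsilon_i - \mathbb E[\varepsilon_0\otimes\varepsilon_0]\big)$ under $L^{p/2}$-$m$-approximability, so that $\sup_{k\le n}\big\|\sum_{i\le k}(\varepsilon_i\otimes\varepsilon_i - \mathbb E[\varepsilon_0\otimes\varepsilon_0])\big\|_{HS}=O_P(\sqrt n)$; dividing by $k\ge\delta n$ then delivers the required uniform $n^{-1/2}$ rate. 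Once this is in hand, the rest of the proof is routine: apply the perturbation expansion, invoke the $d$-dimensional FCLT, and check tightness of the $o_P(1)$ remainder in $D^d[\delta,1]$.
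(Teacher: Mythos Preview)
Your proposal is correct and follows essentially the same route as the paper: both proofs linearize the partial-sample eigenvalues via a first-order perturbation expansion with a quadratic remainder, control the remainder uniformly on $[\delta,1]$ through a maximal/invariance-principle bound for the Hilbert--Schmidt-valued partial sums $\sum_{i\le k}(\varepsilon_i\otimes\varepsilon_i-\mathbb E[\varepsilon_0\otimes\varepsilon_0])$ under $L^{p/2}$-$m$-approximability, and then invoke a multivariate FCLT for the score vectors $(\theta_{i,1},\ldots,\theta_{i,d})$. The only organizational differences are that the paper perturbs $\tilde c_x$ around $(\lfloor nx\rfloor/n)c$ (writing out the eigenvalue equations directly rather than citing the resolvent formula) and removes the sample mean at the kernel level via a separate $O_P(1/n)$ lemma, whereas you normalize to $\bar c_k$ versus $c$ and remove the mean inside the projected scores; these are equivalent rearrangements. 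One small point to tidy: your stated uniform bound is for $\|\bar c_k-c\|_{HS}$ with $\bar c_k$ mean-centered, but the invariance principle you invoke is for the non-centered tensor sums, so you still need the (easy) step that the centering corrections $(\tfrac{1}{k}\sum_{i\le k}\varepsilon_i)\otimes\bar\varepsilon$ etc.\ are $O_P(1/n)$ uniformly on $[\delta,1]$.
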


For $i\in\mathbb{Z}$, let ${\bf \Theta}_i = (\theta_{i,1},\ldots,\theta_{i,d})^\top$, with $\theta_{i,j}$ as in \eqref{theta-def}. It is seen that $\Sigma_d$ is the usual long-run covariance matrix (or spectral density matrix at frequency zero) of the stationary sequence $({\bf \Theta}_i)$ in $\mathbb{R}^d$. Assuming for the moment that the series $(X_i)$ satisfying Assumption \ref{edep} and \ref{d-lam} was iid, $\Sigma_d(j,j)$ in Theorem \ref{thm-joint} would reduce to $2\lambda_j$, coinciding with standard asymptotic normality results for the eigenvalues computed from sample covariance operators based on a simple random sample; see Mas and Menneteau (2003).
%
As a corollary to Theorem \ref{thm-joint}, the limiting distribution of the individual partial sample empirical eigenvalue estimates is obtained. These asymptotics are useful in evaluating whether individual eigenvalues have undergone a structural break.

\begin{cor}\label{th-1}
{ If model \eqref{model-1} and Assumptions \ref{edep} and \ref{d-lam} hold, then for $j=1,\ldots,d$ and $\delta \in(0,1)$,
\begin{align}\label{th-1-eq1}
\sqrt{n} \left( \hat{\lambda}_j(x) -\frac{\lfloor nx \rfloor}{n} \lambda_j\colon x\in[\delta,1]\right) 
\Longrightarrow \big(\sigma_jW(x)\colon x\in[\delta,1]\big)
\qquad(n\to\infty),
\end{align}
where $\Longrightarrow$ denotes weak convergence in $D^d[\delta,1]$, $(W(x)\colon x\in[0,1])$ a standard one-dimensional Brownian motion, and
$
\sigma_j^2 =\Sigma_d(j,j).
$
}
\end{cor}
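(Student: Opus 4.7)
The plan is to deduce Corollary \ref{th-1} directly from Theorem \ref{thm-joint} via the continuous mapping theorem, applied to the coordinate projection. Specifically, for each $j=1,\ldots,d$, the map $\pi_j\colon D^d[\delta,1]\to D[\delta,1]$ sending a $d$-dimensional cadlag path to its $j$-th component is continuous in the Skorohod topology, so weak convergence in $D^d[\delta,1]$ is preserved under $\pi_j$.

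First I would apply $\pi_j$ to both sides of the conclusion of Theorem \ref{thm-joint}. The $j$-th coordinate of $\sqrt{n}(\hat{\mathbf{\Lambda}}_d(x) - (\lfloor nx\rfloor/n)\mathbf{\Lambda}_d)$ is exactly $\sqrt{n}(\hat{\lambda}_j(x) - (\lfloor nx\rfloor/n)\lambda_j)$, yielding the left-hand side of \eqref{th-1-eq1}. On the limit side, $\pi_j$ produces the scalar process $v_j^\T \mathbf{W}^{(d)}(x)$, where $v_j^\T$ denotes the $j$-th row of $\Sigma_d^{1/2}$.

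Next I would identify this limit with $\sigma_j W(x)$. Since $\mathbf{W}^{(d)}$ is a standard $d$-dimensional Brownian motion, $v_j^\T \mathbf{W}^{(d)}$ is a centered Gaussian process with continuous sample paths and independent increments, characterized in distribution by its variance function
\[
\mathrm{Var}\big(v_j^\T \mathbf{W}^{(d)}(x)\big) = \|v_j\|^2\,x = \big(\Sigma_d^{1/2}(\Sigma_d^{1/2})^\T\big)_{jj}\,x = \Sigma_d(j,j)\,x = \sigma_j^2\,x.
\]
Hence $v_j^\T \mathbf{W}^{(d)} \stackrel{d}{=} \sigma_j W$ for a standard one-dimensional Brownian motion $W$, giving the right-hand side of \eqref{th-1-eq1}.

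There is no real obstacle here: the argument is essentially an application of the continuous mapping theorem plus the identification of a Gaussian process by its covariance structure. The only minor care needed is to note that coordinate projection is continuous on $D^d[\delta,1]$ (which holds because the product Skorohod topology dominates the coordinatewise topology), and that identification in distribution suffices since \eqref{th-1-eq1} is itself a statement about weak convergence.
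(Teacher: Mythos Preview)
Your proposal is correct and is precisely the intended argument: the paper does not give an explicit proof of Corollary~\ref{th-1} but simply presents it as a direct consequence of Theorem~\ref{thm-joint}, and the continuous mapping theorem applied to the $j$-th coordinate projection together with the identification of the Gaussian limit via its covariance is exactly how one makes that implication rigorous. There is nothing to add.
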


\section{Structural breaks in the covariance operator}
\label{change-sec}

\subsection{Testing for structural breaks in the spectrum}

As documented in Aue and Horv\'ath (2013) in univariate and multivariate contexts, a natural way to measure the validity of $H_0$ is to consider the magnitude of the vector-valued cumulative sum process
$$
\hat{{\bf \Lambda}}_d(x) - \frac{\lfloor n x \rfloor}{n}\hat{{\bf \Lambda}}_d(1),
$$
maximized over the partial sample parameter $x\in[\delta,1]$. Large values of this magnitude would be interpreted as evidence of inhomogeneity of the eigenvalues. Theorem \ref{thm-joint} may be used to determine the typical size of such a maximum. In order to pursue this goal the following assumption is imposed.

\begin{assumption}\label{mat-inv} The matrix $\Sigma_d$ defined in Theorem \ref{thm-joint} is invertible and there is an estimator $\hat{\Sigma}_d$ of $\Sigma_d$ satisfying
\begin{align}\label{sig-cons}
  |\Sigma_d - \hat{\Sigma}_d |_F = o_P(1),
\end{align}
where $|\cdot|_F$ is the Frobenius norm.
\end{assumption}
Appendix \ref{var-est} outlines a way to construct such a covariance estimator. There, a kernel lag-window type estimator
$$
\hat{\Sigma}_d = \sum_{\ell= -\infty}^{\infty} w\bigg(\frac{\ell}{h}\bigg) \hat{{\bf \Gamma}}_{\ell,\theta},
\qquad
\hat{{\bf \Gamma}}_{\ell,\theta}=\frac{1}{n}\sum_{i\in\mathcal{I}_\ell}\big(\hat{{\bf \Theta}}_i-\bar{{\bf \Theta}}\big)\big(\hat{{\bf \Theta}}_{i+\ell}-\bar{{\bf \Theta}}\big)^\top,
$$
of $\Sigma_d$ is discussed in some detail. Here, $w$ denotes a weight function and $h$ a bandwidth parameter, $\mathcal{I}_\ell=\{1,\ldots,n-\ell\}$ if $\ell\geq 0$ and $\mathcal{I}_\ell=\{1-\ell,\ldots,n\}$ if $\ell<0$, and $\hat{{\bf \Theta}}_j = (\hat{\theta}_{i,1},\ldots,\hat{\theta}_{i,d})^\top$ is the estimated score vector whose entries are given by
\begin{align}\label{theta-def-est}
 \hat{\theta}_{i,j} = \langle (X_i-\bar{X})\otimes (X_i -\bar{X})-\hat C_1,
 \hat\varphi_{j,1}\otimes\hat\varphi_{j,1}\rangle, 
\end{align}
while $\bar{\bf \Theta}$ is the sample mean of the $\hat{\bf \Theta}_i$. It is shown in Appendix~\ref{var-est} that this estimator satisfies \eqref{sig-cons} under standard conditions on the weight function $w$ and the bandwidth $h$. In order to test $H_0$, consider then the quadratic form statistic
$$
J_n(\delta)=J_{d,n}(\delta) = \sup_{\delta \le x \le 1} {\bf \kappa}_n^\top(x) \hat{\Sigma}_d^{-1} {\bf \kappa}_n(x),
$$
where
$$
{\bf \kappa}_n(x) = \sqrt{n}\left(\hat{{\bf \Lambda}}_d(x) - \frac{\lfloor n x \rfloor}{n}\hat{{\bf \Lambda}}_d(1)\right),
\qquad x\in[0,1].
$$
To evaluate the constancy of individual eigenvalues, consider the test statistic
$$
I_{j,n}(\delta)= \sup_{\delta \le x \le 1} \frac{1}{\hat{\sigma}_j} \left| \hat{\lambda}_j(x) - \frac{\lfloor nx \rfloor}{n} \hat{\lambda}_j(1)\right|,
\qquad j=1,\ldots,d,
$$
where $\hat{\sigma}^2_j = \hat{\Sigma}_d(j,j)$. The following result is a consequence of Theorem \ref{thm-joint}.

\begin{theorem}
If the conditions of Theorem \ref{thm-joint}, Assumption \ref{mat-inv} and \eqref{sig-cons} are satisfied, then
$$
J_n(\delta) \stackrel{D}{\to} J(\delta)=\sup_{\delta \le x \le 1} \sum_{j=1}^{d} B_j^2(x)
\qquad (n\to\infty),
$$
and
$$
I_{j,n}(\delta) \stackrel{D}{\to} I(\delta)=\sup_{\delta \le x \le 1} |B_j(x)|
\qquad(n\to\infty),
$$
where $\stackrel{D}{\to}$ indicates convergence in distribution and $(B_j(x)\colon x \in [0,1])$, $j=1,\ldots,d$, are iid standard Brownian bridges, noting that $I(\delta)$ does not depend on $j$.
\end{theorem}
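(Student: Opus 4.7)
The plan is to derive both limit results directly from the weak invariance principle in Theorem \ref{thm-joint} (and its scalar Corollary \ref{th-1}) via the continuous mapping theorem and Slutsky's lemma, using the estimator consistency \eqref{sig-cons} to bring in the data-driven normalizers. The algebraic starting point is the cumulative-sum decomposition
\[
{\bf \kappa}_n(x) = \sqrt n\Bigl(\hat{\bf \Lambda}_d(x)-\tfrac{\lfloor nx\rfloor}{n}{\bf \Lambda}_d\Bigr) - \tfrac{\lfloor nx\rfloor}{n}\,\sqrt n\bigl(\hat{\bf \Lambda}_d(1)-{\bf \Lambda}_d\bigr),
\]
which expresses ${\bf \kappa}_n$ as a continuous functional on $D^d[\delta,1]$ of the partial-sample eigenvalue process appearing in Theorem \ref{thm-joint}. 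Since that process converges weakly to $x\mapsto\Sigma_d^{1/2}{\bf W}^{(d)}(x)$, whose sample paths are almost surely continuous, the weak convergence upgrades from the Skorohod to the uniform topology, and the continuous mapping theorem yields
\[
{\bf \kappa}_n(\cdot)\;\Longrightarrow\;\Sigma_d^{1/2}\bigl({\bf W}^{(d)}(\cdot)-(\cdot)\,{\bf W}^{(d)}(1)\bigr)=\Sigma_d^{1/2}{\bf B}^{(d)}(\cdot),
\]
where ${\bf B}^{(d)}=(B_1,\ldots,B_d)^\top$ is a standard $d$-dimensional Brownian bridge with iid coordinates.

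Next, I would combine this with Assumption \ref{mat-inv}: since $\hat\Sigma_d\to\Sigma_d$ in probability and $\Sigma_d$ is invertible, continuity of matrix inversion gives $\hat\Sigma_d^{-1}\to\Sigma_d^{-1}$ in probability. Slutsky's lemma then upgrades the weak convergence of ${\bf \kappa}_n$ to joint weak convergence of $({\bf \kappa}_n,\hat\Sigma_d^{-1})$ in $C^d[\delta,1]\times\mathbb R^{d\times d}$, and the real-valued mapping $(\phi,A)\mapsto\sup_{\delta\le x\le 1}\phi(x)^\top A\,\phi(x)$ is continuous in the product of uniform and operator norms. The continuous mapping theorem therefore delivers
\[
J_n(\delta)\;\stackrel{D}{\to}\;\sup_{\delta\le x\le 1}{\bf B}^{(d)}(x)^\top\Sigma_d^{1/2}\Sigma_d^{-1}\Sigma_d^{1/2}{\bf B}^{(d)}(x)=\sup_{\delta\le x\le 1}\sum_{j=1}^d B_j^2(x),
\]
which is the first assertion. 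The scalar claim for $I_{j,n}(\delta)$ follows by exactly the same reasoning applied to Corollary \ref{th-1} coordinatewise: the $j$-th component of ${\bf \kappa}_n$ converges weakly to $\sigma_j B_j$ on $[\delta,1]$, and $\hat\sigma_j\to\sigma_j$ in probability by \eqref{sig-cons} combined with the positivity $\sigma_j^2=\Sigma_d(j,j)>0$ guaranteed by the invertibility of $\Sigma_d$.

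The only mildly delicate point is that the processes a priori live in Skorohod rather than uniform space, so the continuous mapping theorem is not immediate; this is resolved by the almost-sure continuity of the Gaussian limit, which makes each functional above almost surely continuous at its limit argument. Beyond this the proof is essentially mechanical, as all probabilistic heavy lifting has been concentrated into Theorem \ref{thm-joint} and into the long-run covariance consistency \eqref{sig-cons}, whose verification is deferred to Appendix \ref{var-est}.
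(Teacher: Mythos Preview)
Your argument is correct and is precisely the standard route the paper has in mind: the text gives no separate proof of this theorem, merely stating that ``the following result is a consequence of Theorem~\ref{thm-joint}'', and your CUSUM decomposition of ${\bf \kappa}_n$, together with the continuous mapping theorem, Slutsky's lemma via \eqref{sig-cons}, and the observation that continuity of the Gaussian limit upgrades Skorohod convergence to uniform convergence, is exactly how one fills in those details.
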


A test of asymptotic size $\alpha$ for $H_0$ is to reject if $J_n(\delta)$ or $I_{j,n}(\delta)$ exceed the $1-\alpha$ quantile of the limit distributions distribution $J(\delta)$ and $I(\delta)$, respectively.
These distributions can be obtained via Monte-Carlo simulation. Below, the test based on $J_n(\delta)$ is referred to as the {\it joint test}, the test based on $I_{j,n}(\delta)$ as the {\it $j$th test} or the {\it $j$th individual test}.

\subsection{Testing for structural breaks in the trace}

The eigenvalue $\lambda_j$ is used to determine the variance of $X_0$ explained by the $j$th principal component $\varphi_j$ by comparing its magnitude to the {\it cummulative variance} of the function $X_0$ measured by the trace of the covariance operator
$$
\sum_{j=1}^{\infty}\lambda_j = \int C(t,t)dt=\mathrm{tr}(c).
$$
A common criterion for selecting the number of principal components for subsequent analysis is to take the minimum $d$ that causes the {\it total variance explained (TVE)} by the first $d$ principal components to exceed a user selected threshold $v$, that is,
\begin{equation}\label{TVE}
d=d_v = \min \left\{ d\colon \frac{\lambda_1 + \cdots + \lambda_d}{\mathrm{tr}(c)} \ge v   \right\}.
\end{equation}
When performing principal component analysis for functional time series it is often also of interest to determine if $\mathrm{tr}(c)$ is constant in conjunction with the constance of the largest eigenvalues. A partial sample estimator of the trace is given by
\begin{align}\label{trace-def}
T_n(x) = \frac{1}{n} \sum_{i=1}^{\lfloor nx \rfloor} \|X_i-\bar X\|^2,
\qquad x\in[0,1].
\end{align}
The large-sample behavior of a centered version of the process $(T_n(x)\colon x\in[0,1])$ is given next.

\begin{theorem}\label{thm-trace}
If Assumptions \ref{edep} and \ref{d-lam} hold, then
$$
\big(T_n(x) - x\,\mathrm{tr}(c)\colon x\in[0,1]\big)
\Longrightarrow 
(\sigma_T W(x)\colon x\in[0,1])
\qquad(n\to\infty),
$$
where $\Longrightarrow$ denotes weak convergence in $D[0,1]$, $(W(x)\colon x\in[0,1])$ a standard Brownian motion and, with $\xi_i=\|X_i-\mu\|^2$,
\begin{align}\label{sig-trace-def}
\sigma_T^2 = \sum_{i=-\infty}^{\infty} \mathrm{Cov}(\xi_0,\xi_i).
\end{align}
\end{theorem}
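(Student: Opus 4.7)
The plan is to prove the weak invariance principle at the $\sqrt n$ rate for the partial-sum trace process; the normalization appears to be implicit in the statement, so I will work with $\sqrt n(T_n(x)-x\,\mathrm{tr}(c))$. First, since $X_i-\bar X=\varepsilon_i-\bar\varepsilon$ with $\bar\varepsilon=n^{-1}\sum_{j=1}^n\varepsilon_j$, setting $\xi_i=\|\varepsilon_i\|^2$ yields the algebraic decomposition
\begin{align*}
T_n(x)=\frac{1}{n}\sum_{i=1}^{\lfloor nx\rfloor}\xi_i-\frac{2}{n}\bigg\langle\sum_{i=1}^{\lfloor nx\rfloor}\varepsilon_i,\,\bar\varepsilon\bigg\rangle+\frac{\lfloor nx\rfloor}{n}\|\bar\varepsilon\|^2.
\end{align*}
Under Assumption \ref{edep} the Hilbert-valued functional central limit theorem for $L^p$-$m$-approximable sequences (H\"ormann and Kokoszka, 2010) gives $\sup_{x\in[0,1]}\|\sum_{i=1}^{\lfloor nx\rfloor}\varepsilon_i\|=O_P(\sqrt n)$ and hence $\|\bar\varepsilon\|=O_P(n^{-1/2})$. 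Consequently the last two summands are each $O_P(1/n)$ uniformly in $x$, so after multiplication by $\sqrt n$ they vanish as $o_P(1)$ in $D[0,1]$ and may be discarded.

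The problem thus reduces to the scalar invariance principle
\begin{align*}
\frac{1}{\sqrt n}\sum_{i=1}^{\lfloor nx\rfloor}\big(\xi_i-\mathrm{tr}(c)\big)\;\Longrightarrow\;\sigma_T W(x)\qquad\text{in }D[0,1],
\end{align*}
using $\mE[\xi_0]=\int C(t,t)\,dt=\mathrm{tr}(c)$. Because $\xi_i=\|g(\epsilon_i,\epsilon_{i-1},\ldots)\|^2$ is itself a Bernoulli shift in the innovations $(\epsilon_j)$, the sequence $(\xi_i)$ is strictly stationary, and I will verify that it is scalar $L^{p/2}$-$m$-approximable with $p/2>2$. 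Setting $\xi_{i,\ell}=\|\varepsilon_{i,\ell}\|^2$, the elementary identity $|\,\|u\|^2-\|v\|^2\,|\le\|u-v\|(\|u\|+\|v\|)$ and H\"older's inequality give
\begin{align*}
\big(\mE|\xi_i-\xi_{i,\ell}|^{p/2}\big)^{2/p}\le \big(\mE\|\varepsilon_i-\varepsilon_{i,\ell}\|^p\big)^{1/p}\,\big(\mE(\|\varepsilon_i\|+\|\varepsilon_{i,\ell}\|)^p\big)^{1/p}.
\end{align*}
The second factor is uniformly bounded in $\ell$, since $\varepsilon_{i,\ell}\stackrel{d}{=}\varepsilon_i$ by construction and $(\varepsilon_i)$ has finite $p$th moment (implicit in Assumption \ref{edep}(b)); the first factor is summable in $\ell$ directly by Assumption \ref{edep}(b). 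An application of the classical invariance principle for $L^q$-$m$-approximable scalar sequences with $q>2$ (see H\"ormann and Kokoszka, 2010) then delivers the displayed weak limit with $\sigma_T^2=\sum_{i\in\mZ}\mathrm{Cov}(\xi_0,\xi_i)$, matching \eqref{sig-trace-def}; positivity of $\sigma_T^2$ is not needed for the theorem, and the summability is itself a consequence of the $L^{p/2}$-$m$-approximability just verified.

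The main technical point is the moment-transfer step: the quadratic nonlinearity $u\mapsto\|u\|^2$ drops the effective moment exponent from $p$ to $p/2$, which is precisely why the hypothesis $p>4$ (rather than merely $p>2$) is needed — it leaves a scalar moment strictly greater than $2$, the minimum required for the scalar FCLT in $D[0,1]$. Every other ingredient — the decomposition, the bound on $\bar\varepsilon$, and the invocation of the scalar invariance principle — is routine once this transfer is in hand, and notably Assumption \ref{d-lam} on the eigenvalue spacings plays no role in this argument.
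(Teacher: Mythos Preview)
Your proposal is correct and follows essentially the same route as the paper's own (sketched) proof: remove the effect of centering by $\bar X$ to reduce to partial sums of $\xi_i=\|\varepsilon_i\|^2$, transfer the $L^p$-$m$-approximability of $(\varepsilon_i)$ to $L^{p/2}$-$m$-approximability of the scalar sequence $(\xi_i)$ via the quadratic inequality and H\"older, and then invoke a scalar functional CLT (the paper cites Wu (2005), you cite H\"ormann--Kokoszka; both suffice). Your observation that Assumption~\ref{d-lam} is not actually used is also correct.
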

\noindent Utilizing Theorem~\ref{thm-trace} to test for a structural break in the trace of the covariance operator, one may set up the test statistic
$$
M_n = \frac{1}{\hat{\sigma}_T} \sup_{0\le x \le 1} |T_n(x) - x T_n(1)|,
$$
with a consistent estimator of $\sigma_T^2$ of the form
$$
\hat{\sigma}_T^2 = \sum_{\ell=-\infty}^{\infty} w \left( \frac{\ell}{h} \right) \hat{\gamma}_\ell,
\qquad
\hat{\gamma}_\ell = \frac{1}{n} \sum_{i\in \mathcal{I}_\ell} (\hat{\xi}_i - \bar{\xi})(\hat{\xi}_{i+\ell} - \bar{\xi}),
$$
where $w$ is a weight function, $h$ a bandwidth parameter, $\hat\xi_i=\|X_i-\bar X\|^2$ and $\mathcal{I}_\ell$ is as above. The consistency of this estimator under standard assumptions on $w$ and $h$ is discussed in Appendix \ref{var-est}. The following result is a consequence of Theorem~\ref{thm-trace}.

\begin{cor}
If the conditions of Theorem~\ref{thm-trace} are satisfied and if $\hat\sigma_T^2$ is consistent for $\sigma_T^2$, then
$$
M_n \stackrel{D}{\to} M= \sup_{0 \le x \le 1} |B(x)|
\qquad (n\to \infty),
$$
where $(B(x)\colon x \in [0,1])$ is a standard Brownian bridge.
\end{cor}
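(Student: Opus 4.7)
The plan is to derive this corollary as a direct consequence of Theorem \ref{thm-trace} by combining the continuous mapping theorem with Slutsky's theorem, using the consistency of $\hat\sigma_T^2$ for $\sigma_T^2$ asserted in the hypothesis.

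First, I would rewrite the centered process driving $M_n$ in terms of the process that appears in Theorem \ref{thm-trace}. Observe the algebraic identity
\begin{equation*}
T_n(x) - x T_n(1) = \big[T_n(x) - x\,\mathrm{tr}(c)\big] - x\,\big[T_n(1) - \mathrm{tr}(c)\big],
\qquad x\in[0,1],
\end{equation*}
so the process $(T_n(x) - xT_n(1)\colon x\in[0,1])$ is obtained from $(T_n(x) - x\,\mathrm{tr}(c)\colon x\in[0,1])$ by applying the bridge-forming map $\Phi\colon D[0,1]\to D[0,1]$ defined by $\Phi(f)(x) = f(x) - x f(1)$. This map is continuous with respect to the Skorohod topology on $D[0,1]$ (evaluation at $1$ is continuous because $1$ is not a possible jump location in the limit, and addition is continuous at a continuous argument), so Theorem \ref{thm-trace} together with the continuous mapping theorem yields
\begin{equation*}
\big(T_n(x) - x T_n(1)\colon x\in[0,1]\big)
\Longrightarrow \big(\sigma_T[W(x) - x W(1)]\colon x\in[0,1]\big)
=\big(\sigma_T B(x)\colon x\in[0,1]\big),
\end{equation*}
where $(B(x)\colon x\in[0,1])$ is a standard Brownian bridge.

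Next, I would apply the continuous mapping theorem a second time to the sup-norm functional $f\mapsto\|f\|_\infty=\sup_{0\le x\le 1}|f(x)|$, which is continuous on $D[0,1]$ because the limit process has continuous sample paths almost surely. This produces
\begin{equation*}
\sup_{0\le x\le 1}\big|T_n(x) - x T_n(1)\big|
\stackrel{D}{\to} \sigma_T \sup_{0\le x\le 1}|B(x)|
\qquad(n\to\infty).
\end{equation*}
Finally, by the assumed consistency $\hat\sigma_T^2\to\sigma_T^2$ in probability, the continuous mapping theorem gives $\hat\sigma_T/\sigma_T\to 1$ in probability, and Slutsky's theorem delivers
\begin{equation*}
M_n=\frac{1}{\hat\sigma_T}\sup_{0\le x\le 1}\big|T_n(x) - x T_n(1)\big|
\stackrel{D}{\to} \sup_{0\le x\le 1}|B(x)|,
\end{equation*}
which is the claim.

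Since the main analytical work (the invariance principle and the consistency of $\hat\sigma_T^2$) is already provided, there is no real obstacle here; the only points that require a line of justification are the Skorohod-continuity of the bridge map $\Phi$ at paths that are almost surely continuous, and the fact that $\sigma_T>0$ so that division by $\hat\sigma_T$ is asymptotically well-defined. Both are standard in change-point arguments of this type.
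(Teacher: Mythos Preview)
Your proposal is correct and is precisely the standard continuous-mapping/Slutsky argument that the paper has in mind; the paper does not write out a proof at all, merely stating that ``the following result is a consequence of Theorem~\ref{thm-trace}.'' Your decomposition $T_n(x)-xT_n(1)=[T_n(x)-x\,\mathrm{tr}(c)]-x[T_n(1)-\mathrm{tr}(c)]$, the bridge map, and the final Slutsky step are exactly what is needed, and your remarks about Skorohod continuity at continuous limit paths and about $\sigma_T>0$ cover the only points requiring care.
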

As for the joint and the individual tests above, a test of asymptotic size $\alpha$ for the null of no structural break in the trace is to reject if $M_n$ exceeds the $1-\alpha$ quantile of the limit distribution $M$. This test will be referred to as the {\it trace test} below.

\subsection{Consistency of test statistics}

In this subsection, the test statistics proposed above are shown to be consistent under $H_A$. To this end, suppose that the functional time series is stationary and weakly dependent before and after the break point $k^*$, and that an additional regularity condition is satisfied to ensure that the matrix estimate $\hat{\Sigma}_d$ does not have eigenvalues diverging to $+\infty$ under $H_A$. All details are specified in the following assumption.

\begin{assumption}\label{alt-as}

(a) There are measurable functions $g_1,g_2\colon S^\infty\to L^2([0,1])$, where $S$ is a measurable space, and iid innovations $(\epsilon_i)$ taking values in $S$ such that
$$
\varepsilon_i=\left\{
\begin{array}{l@{\quad}l}
      g_1(\epsilon_i,\epsilon_{i-1},\ldots), &  i \le k^*, \\[.2cm]
      g_2(\epsilon_i,\epsilon_{i-1},\ldots), &  i > k^*,
\end{array} \right.
$$
for $i\in\mathbb{Z}$, where $g_1$ and $g_2$ satisfy Assumption \ref{edep}(b). Let $C_1(t,t^\prime) = \mathrm{Cov}(\varepsilon_1(t),\varepsilon_1(t^\prime))$ and $C_2(t,t^\prime) =\mathrm{Cov}(\varepsilon_n(t),\varepsilon_n(t^\prime))$. Let $(\lambda_j^{(1)},\lambda_j^{(2)})$ and $(\varphi_j^{(1)},\varphi_j^{(2)})$ denote the eigenelements of $C_1$ and $C_2$, respectively. 

(b) Let $(\lambda_j^*,\varphi_j^*)$ denote the eigenelements of the integral operator $c^*$ with kernel
$$
C^*(t,t^\prime)= \tau C_1(t,t^\prime) + (1-\tau)C_2(t,t^\prime),
$$
where $\tau\in(0,1)$ is such that $k^*=\lfloor\tau n\rfloor$. Assume then invertibility of the matrices $\Sigma_d^{(1)}$ and $\Sigma_d^{(2)}$, whose entries are defined by
$$
\Sigma_d^{(k)}(j,j^\prime)= \sum_{i=-\infty}^\infty \mathrm{Cov}(\theta_{0,j}^{(k)},\theta_{i,j^\prime}^{(k)}),
\qquad j,j^\prime=1,\ldots,d;\; k=1,2,
$$
with
\begin{align*}
\theta^{(k)}_{i,j} = \langle\varepsilon_i\otimes\varepsilon_i-\mathbb{E}[\varepsilon_i\otimes\varepsilon_i],
\varphi_j^*\otimes\varphi_j^*\rangle,
\qquad
\varepsilon_i= g_k(\epsilon_i,\epsilon_{i-1},\ldots),~~k=1,2.
\end{align*}
\end{assumption}

Under \eqref{model-1}, Assumption \ref{alt-as} guarantees that the sequence $(X_i)$ is stationary and weakly dependent on the pre- and post-break segments. It is assumed that the first $d$ eigenspaces associated with pre- and post-break covariance operators are the same and one-dimensional. One notable feature of Assumption \ref{alt-as} is that the eigenfunctions of pre- and post-break covariance operators need not necessarily align. In particular, all proposed tests are expected to be consistent if both eigenvalues and eigenfunctions undergo a structural break, so long as $H_A$ holds.

\begin{theorem}\label{thm-cons}
Let Assumption \ref{alt-as} and $H_A$ be satisfied.

(a) If $\delta < \tau$, then
$J_n(\delta) \stackrel{P}{\to} \infty$ as $n\to \infty$;

(b) If $\delta<\tau$ and $\lambda_j^{(1)} \ne \lambda_j^{(2)}$, then
$I_{j,n}(\delta) \stackrel{P}{\to} \infty$ for $j=1,\ldots,d$ as $n\to \infty$;

(c) If $\int C_1(t,t) dt \ne \int C_2(t,t)dt$, then
$M_n \stackrel{P}{\to} \infty$ as $n\to \infty$.
\end{theorem}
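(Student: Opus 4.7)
The plan is to evaluate each test statistic at the true change fraction $x=\tau$ and to show that its numerator grows at rate $\sqrt{n}$, while the variance normalizers in the denominators remain of order $o_P(\sqrt{n})$. The starting observation, which exploits the piecewise stationarity granted by Assumption \ref{alt-as} together with standard law-of-large-numbers arguments applied to $L^p$-$m$-approximable sequences in each regime, is that uniformly in $x\in[\delta,1]$,
$$
\hat C_x \stackrel{P}{\to} \min(x,\tau)\,C_1+\max(x-\tau,0)\,C_2
$$
in Hilbert--Schmidt norm; centering by $\bar X$ contributes only a lower-order correction since the mean $\mu$ is common to both regimes. Invoking the same Weyl-type perturbation inequalities used in the proof of Theorem \ref{thm-joint} then delivers $\hat\lambda_j(\tau)\stackrel{P}{\to}\tau\lambda_j^{(1)}$ and $\hat\lambda_j(1)\stackrel{P}{\to}\lambda_j^{*}$ for $j=1,\ldots,d$, as well as $T_n(\tau)\stackrel{P}{\to}\tau\,\mathrm{tr}(c_1)$ and $T_n(1)\stackrel{P}{\to}\tau\,\mathrm{tr}(c_1)+(1-\tau)\,\mathrm{tr}(c_2)$.

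For part (c), combining the trace limits yields
$$
T_n(\tau)-\tau T_n(1)\stackrel{P}{\to}\tau(1-\tau)\big[\mathrm{tr}(c_1)-\mathrm{tr}(c_2)\big],
$$
which is bounded away from zero under the hypothesis $\int C_1(t,t)\,dt\ne\int C_2(t,t)\,dt$. Rescaled by the implicit $\sqrt{n}$ factor carried by $M_n$, the supremum in its numerator diverges at rate $\sqrt{n}$, so consistency follows once $\hat\sigma_T=o_P(\sqrt{n})$ is verified.

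For parts (a) and (b), evaluating the partial sum process at $x=\tau$ produces
$$
{\bf \kappa}_n(\tau)=\sqrt{n}\,\tau\big({\bf \Lambda}_d^{(1)}-{\bf \Lambda}_d^{*}\big)+o_P(\sqrt{n}),
$$
and coordinatewise $\sqrt{n}(\hat\lambda_j(\tau)-\tau\hat\lambda_j(1))=\sqrt{n}\,\tau(\lambda_j^{(1)}-\lambda_j^{*})+o_P(\sqrt{n})$. Under $H_A$, the mixture structure of $c^*$ together with the spacing in Assumption \ref{d-lam} applied to each regime ensures ${\bf \Lambda}_d^{(1)}\ne{\bf \Lambda}_d^{*}$ (driving (a)) and $\lambda_j^{(1)}\ne\lambda_j^{*}$ whenever $\lambda_j^{(1)}\ne\lambda_j^{(2)}$ (driving (b)), so the deterministic offsets are nonzero. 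Combining the inequality
$$
J_n(\delta)\ge {\bf \kappa}_n^{\top}(\tau)\,\hat\Sigma_d^{-1}\,{\bf \kappa}_n(\tau)\ge\frac{\|{\bf \kappa}_n(\tau)\|^2}{\lambda_{\max}(\hat\Sigma_d)}
$$
with $\lambda_{\max}(\hat\Sigma_d)=O_P(1)$ then settles (a), and an analogous scalar bound settles (b).

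The principal obstacle is precisely the upper bounds $\lambda_{\max}(\hat\Sigma_d)=O_P(1)$ and $\hat\sigma_T^2=o_P(\sqrt{n})$, which are nontrivial under $H_A$ because the residuals used to form these kernel lag-window estimators have a piecewise-constant mean with a jump at $k^*$ that can inflate the apparent long-run variance. The plan is to split each sample autocovariance into a stationary piece, controlled uniformly in the lag via the regime-wise $L^p$-$m$-approximability provided by Assumption \ref{edep}(b), and a deterministic bias piece proportional to the squared jump in the mean of $\hat{\bf \Theta}_i$ (respectively $\hat\xi_i$), whose kernel-weighted sum is $O_P(h)$. Choosing the bandwidth $h=o(\sqrt{n})$, as already required for consistency under $H_0$ in Appendix \ref{var-est}, keeps this bias negligible relative to the $\sqrt{n}$ divergence of the numerators and closes the argument for all three parts.
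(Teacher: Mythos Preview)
Your strategy matches the paper's: lower-bound each statistic at $x=\tau$, show the CUSUM there is $\sqrt{n}$ times a nonzero constant, and control the variance normalizer. Two points deserve correction.

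First, the bound $\lambda_{\max}(\hat\Sigma_d)=O_P(1)$ you invoke to settle (a) does not hold under $H_A$, and your own final paragraph concedes this: the regime-jump bias in the empirical autocovariances contributes a rank-one term whose kernel-weighted sum is of exact order $h$, so $\lambda_{\max}(\hat\Sigma_d)=O_P(h)$, not $O_P(1)$. The paper makes this precise by identifying the limit
\[
\Sigma_{\star,n}=\tau\Sigma_d^{(1)}+(1-\tau)\Sigma_d^{(2)}+\tau(1-\tau)({\bm\mu}_1-{\bm\mu}_2)({\bm\mu}_1-{\bm\mu}_2)^\top\sum_{\ell}w\!\left(\frac{\ell}{h}\right)
\]
and reading off that its largest eigenvalue is $O(h)$ while its smallest is bounded below. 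Your quadratic-form inequality then yields $J_n(\delta)\ge c\,n/h\to\infty$ (since $h/\sqrt{n}\to 0$), which is exactly the rate the paper obtains; you should state the $O_P(h)$ bound from the outset rather than first asserting $O_P(1)$ and then retracting it.

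Second, the assertion that $H_A$ together with the eigenvalue spacing forces ${\bf\Lambda}_d^{(1)}\ne{\bf\Lambda}_d^{*}$ is not justified as written: eigenvalues of a convex combination of operators are not in general the convex combination of the eigenvalues, and one can construct $C_1,C_2$ with ${\bf\Lambda}_d^{(1)}\ne{\bf\Lambda}_d^{(2)}$ yet ${\bf\Lambda}_d^{(1)}={\bf\Lambda}_d^{*}$ when their eigenfunctions differ. The paper instead writes the deterministic offset as $\tau(1-\tau)({\bf\Lambda}_d^{(1)}-{\bf\Lambda}_d^{(2)})$ via $\hat\lambda_j(1)\to\tau\lambda_j^{(1)}+(1-\tau)\lambda_j^{(2)}$, which relies on the first $d$ eigenspaces of $C_1$ and $C_2$ coinciding (as stated in the text following Assumption~\ref{alt-as}); under that alignment the nonvanishing of the offset is immediate from $H_A$. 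You should invoke that alignment explicitly rather than appealing to ``mixture structure''.
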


The proof of Theorem \ref{thm-cons} is provided in Appendix \ref{var-est}. The main difficulty in establishing the result is deriving the asymptotic behaviour of $\hat{\Sigma}_d$ and the eigenvalues of its inverse under $H_A$.

\section{Simulation Study}
\label{sim}

\subsection{Setting}
\label{setting}

Data generating processes were considered following the setting of Aue et al.\ (2015, 2018). Specifically, functional data of sample size $n$ were generated utilizing $D=21$ Fourier basis functions $v_1, \dots, v_D$ on the unit interval $[0,1]$. 
The results reported below remained largely invariant to the choice of larger values of $D$.  Without loss of generality, the mean function $\mu$ in model (\ref{model-1}) was assumed to be the zero function. Independent curves were then constructed according to
$$
\zeta_i= \sum_{\ell=1}^D \xi_{i,\ell}v_\ell,\qquad i=1,\dots,n,
$$
where $\bm{\xi}_i = (\xi_{i,1}, \dots, \xi_{i,D})$ are independent normal random variables with zero mean and standard deviations $\bm{\sigma}=(\sigma_1, \dots, \sigma_D)$. Two standard deviations were chosen to mimic two different eigenvalue decays of the covariance operators, namely:
\begin{itemize}\itemsep-.3ex
\item[(a)] {\it Fast decay}: $\bm{\sigma}=(3^{-\ell}:\ell=1,\dots,D)$;
\item[(b)] {\it Slow decay}: $\bm{\sigma}=(\ell^{-1}:\ell=1,\dots,D)$.
\end{itemize}
To explore the finite-sample performance of the proposed tests, artificial breaks were inserted into the eigenvalue structures in (a) and (b) in the following way. For a fixed break location $k^* \in\{1,\dots,n\}$, consider
\[
\bm{\sigma}^{(1)} = \bm{\sigma}
\qquad\mbox{and}\qquad
\bm{\sigma}^{(2)} = \bm{b} \circ \bm{\sigma},
\]
where $\bm{\sigma}$ is as above, $\bm{b}=(b_1,\ldots,b_D)$ is a vector of sensitivity parameters and $\circ$ denotes the Hadamard product (entry-wise multiplication). Then, $\bm{\sigma}^{(1)}$ and $\bm{\sigma}^{(2)}$ specifiy the eigenvalue structure of the pre-and post-break observations with $\bm{b}$ controlling the magnitude of the break in a multiplicative fashion. For example, setting $\bm{b}=(1,\dots,1)$ results in the null hypothesis of structural stability, while $\bm{b} = (2,1,\dots,1)$ restricts the break to occur only in the leading eigenvalue, with $b_1=2$ determining the break size.

Both independent curves $\varepsilon_i=\zeta_i$ and functional time series curves were used, the latter to explore the effect of temporal dependence on the proposed tests. In particular, first-order FARs $\varepsilon_i = \Psi^{(k)}\varepsilon_{i-1}+\zeta_i$, $i=1,\dots,n$ and $k=1,2$, were generated (using a burn-in period of $n/2$ initial curves that were discarded). The operator was set up as $\Psi^{(k)} = \kappa\Psi_0^{(k)}$, where the random operator $\Psi_0^{(k)}$ was represented by a $D \times D$ matrix whose entries consisted of independent, centered normal random variables with standard deviations given by $\bm{\sigma}^{(k)}\bm{\sigma}^{(k)\top}$. A scaling was applied to achieve $||\Psi_0^{(k)}||=1$. The constant $\kappa$ can then be used to adjust the strength of temporal dependence. To ensure stationarity of the time series, $|\kappa|=0.8$ was selected.

With the above in place, the following four settings were studied.
\begin{itemize}\itemsep-.3ex
\item {\it Setting 1:} $b_1$ varies between $1, 1.5, 2, 3, 5$ and $b_\ell=1$ for all $\ell\neq1$;
\item {\it Setting 2:} $b_2$ varies between $1, 1.5, 2, 3, 5$ and $b_\ell=1$ for all $\ell\neq2$;
\item {\it Setting 3:} $b_3$ varies between $1, 1.5, 2, 3, 5$ and $b_\ell=1$ for all $\ell\neq3$;
\item {\it Setting 4:} $b_1$, $b_2$ and $b_3$ vary between $1, 1.25, 1.5, 1.75, 2$ and $b_\ell=1$ for all $\ell\neq 1,2,3$.
\end{itemize}

Settings 1--3 correspond to a structural break individually affecting the first, second and third eigendirections, respectively. Setting 4 allows for the leading three eigendirections to jointly undergo a structural break. All settings include the null hypothesis by setting all $b_\ell$ to unity. 

Combining the previous paragraphs, the functional curves $X_i = \varepsilon_i$, $i=1,\dots, n$, were generated according to model \eqref{model-1}. Simulations were run for both independent and FAR(1) curves for sample sizes $n=100$, $200$ and $500$ across the different specifications above and break locations $k^*=\lfloor\tau n\rfloor$ with $\tau=0.25,0.5$. For each data generating process, the individual test statistic $I_{j,n}(\delta)$, the joint test statistic $J_n(\delta)$ and the trace test statistic $M_n$ were applied to detect structural breaks, with $\delta=0.1$. All results reported in the next sections are based on 1{,}000 runs of the simulation experiments.

\subsection{Level and power of the detection procedures}
\label{power}

Empirical level and power of the proposed methods were evaluated relative to the nominal level $\alpha=0.05$. The results are presented in Table \ref{tab:level}. It can be seen that even for these rather small-to-moderate sample sizes, tests kept levels rather well across all specifications.

\begin{table}[h]
\vspace{.5cm}
\centering
\begin{tabular}{c@{\qquad}c@{\qquad}c@{\qquad}c@{\qquad}c@{\qquad}c@{\qquad}c@{\qquad}c}
\hline
Decay & DGP &$n$ & $J_n(\delta)$ & $I_{1,n}(\delta)$ & $I_{2,n}(\delta)$ & $I_{3,n}(\delta)$ & $M_n$  \\ \hline
slow & IID & \phantom{1}100 & 0.02 & 0.04 & 0.04 & 0.04 & 0.04 \\
&& 200 &  0.03 & 0.04 & 0.05 & 0.05 & 0.04 \\
&& 500 &  0.04 & 0.05 & 0.05 & 0.05 & 0.05 \\[.1cm]
& FAR(1) & \phantom{1}100 & 0.08 & 0.08 & 0.05 & 0.05 & 0.03  \\
&&200 & 0.06 & 0.07 & 0.03 & 0.03 & 0.04 \\
&& 500 &  0.06 & 0.06 & 0.06 & 0.06 & 0.05 \\
\hline
fast & IID & \phantom{1}100 & 0.03 & 0.04 & 0.04 & 0.04 & 0.04 \\
&&200 & 0.04 & 0.06 & 0.04 & 0.04 & 0.04\\
&& 500 &  0.04 & 0.05 & 0.05 & 0.05 & 0.05 \\[.1cm]
& FAR(1) & \phantom{1}100 & 0.07 & 0.07 & 0.04 & 0.04 & 0.03 \\
&&200 & 0.06 & 0.07 & 0.05 & 0.05 & 0.03  \\
&& 500 &  0.06 & 0.06 & 0.04 & 0.04 & 0.04 \\
\hline
\end{tabular}
\caption{Empirical sizes for the various detection procedures for two data generation processes. The nominal level was $\alpha=0.05$. $J_n(\delta)$ refers to the joint test for the first three eigenvalues, $I_{j,n}$, to the test for the $j$th eigenvalue, $j=1,2,3$, and $M_n$ to the the trace test; $\delta=0.1$.}
\label{tab:level}
\end{table}

To examine the power of the tests, structural breaks were inserted as described in Section \ref{setting}. The empirical rejection rates for each test statistic are reported as power curves in Figure~\ref{powerSlowIID} when the errors in model \ref{model-1} are iid curves, and the decay of the eigenvalues of the covariance operator is slow, as specified in setting (b) in the previous section.
Further simulation evidence is provided in the Appendix. The findings may be summarized as follows.

\begin{figure}[h]
\vspace{0.4cm}
\begin{center}
\includegraphics[width=\textwidth]{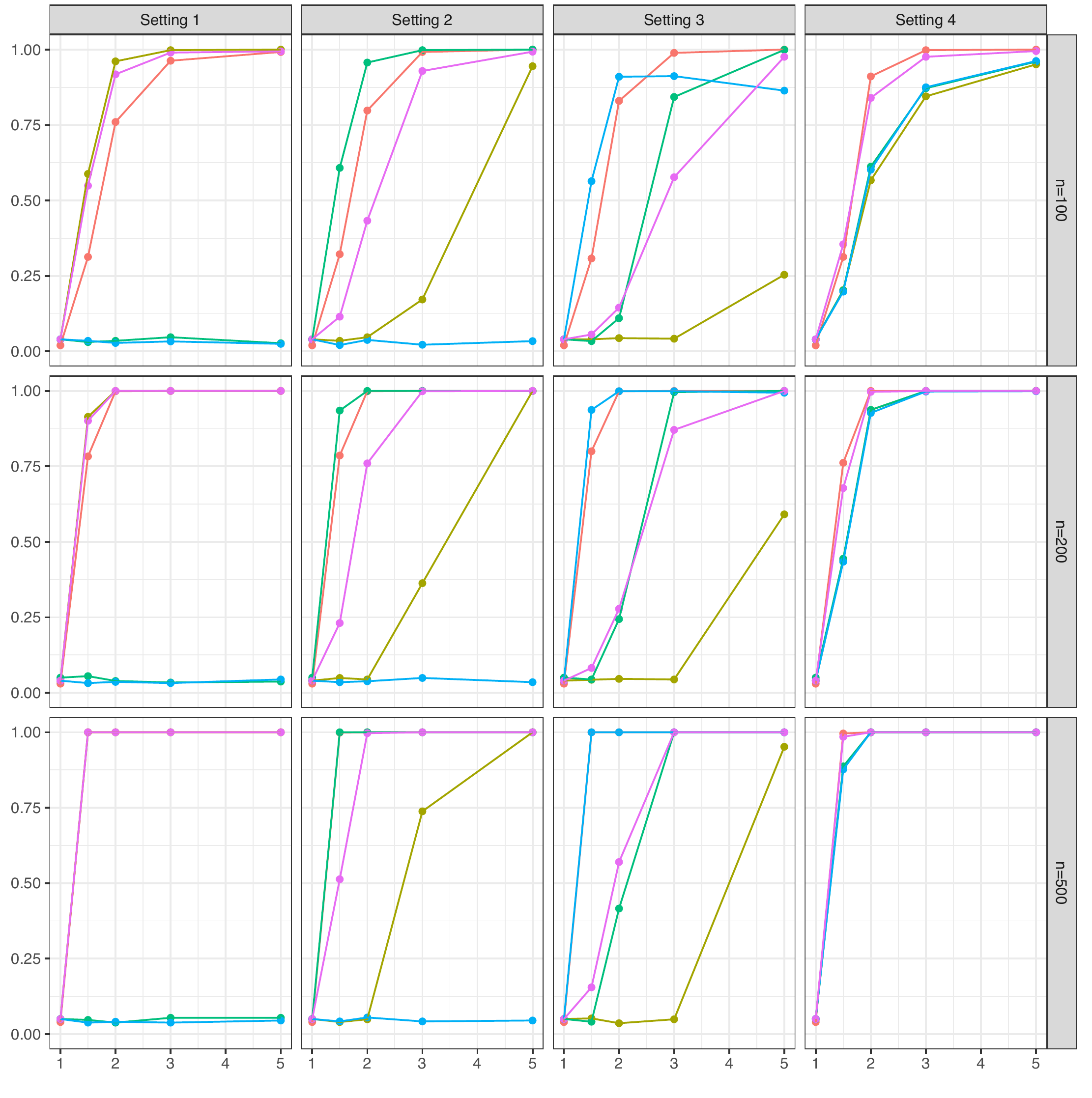}
\end{center}
\vspace{-0.4cm}
\caption{Power curves for the various test procedures for structural breaks in slowly decaying eigenvalues as in (b), iid functional curves and three sample sizes.}
\label{powerSlowIID}
\end{figure}

\begin{itemize}\itemsep-.4ex
\item When the break is dominant in a single eigenvalue, the corresponding individual eigenvalue test $I_{j,n}(\delta)$ tended to have reasonably high empirical power. The joint test $J_n(\delta)$ was generally competitive with its individual counterparts, losing some power due to the estimation of eigenvalues not contributing to the structural break.
\item Some care is required in the labeling of test statistics and settings. For instance, in the case that a sufficiently large break $b_2$ is inserted into the ``second'' eigendirection, this break will become dominant and constitute the leading mode of variation of the operator $c^*$ introduced in Assumption~\ref{alt-as}\,(b). It will therefore be picked up by the first individual test $I_{1,n}(\delta)$. This effect is most clearly seen in Figure \ref{powerSlowIID} for $b_3\geq 3$ and the test $I_{2,n}(\delta)$ predominantly picking up this break.
\item When the break is not dominant but spread out across the three largest eigenvalues as prescribed in Setting 4, then the advantage of the joint test $J_n(\delta)$ becomes more visible, especially for small sample sizes.
\item The test for breaks in the trace displays higher empirical power when the break occurs in larger eigenvalues, since these contribute more to total variation. Once the break is inserted in smaller eigenvalues, the trace test loses some power. As expected, this phenomenon is even more evident when the eigenvalues of the covariance operator have a fast decay (results not shown here).
\item The expected improvement in empirical power when $n$ increased was noted.
\end{itemize}

\subsection{Performance of break date estimates}
\label{estimate}

Once the null hypothesis of structural stability is rejected, it should be followed by an estimation of the break date. Assuming that model \eqref{model-1} and Assumptions \ref{edep} and \ref{d-lam} hold, the break date estimator $\hat{k}_{j,n}^* = \lfloor n\hat{x}_{j,n}^*\rfloor$ accompanying the $j$th individual test can be specified through
$$
\hat{x}_{j,n}^* = \arg\max_{\delta\leq x\leq1}\frac{1}{\hat{\sigma}_j} \bigg| \hat{\lambda}_j(x) - \frac{\lfloor nx \rfloor}{n} \hat{\lambda}_j(1)\bigg|,
$$
where $\delta\in(0,1)$ and $\hat{\sigma}^2_j$ is a consistent estimator of $\sigma_j^2$ as defined in Theorem \ref{th-1}. The break date estimator $\tilde{k}_n^* = \lfloor n\tilde{x}_n^*\rfloor$ accompanying the joint test can be set up with
$$
\tilde{x}_n^* = \arg\max_{\delta \le x \le 1} {\bf \kappa}_n(x)^\top \hat{\Sigma}_d^{-1} {\bf \kappa}_n(x),
$$
where $\kappa_n(x)$ and $\hat{\Sigma}_d$ are defined in Section 2.2. Finally, in a similar fashion, the break date estimator $\bar{k}_n^* = \lfloor n\bar{x}_n^*\rfloor$ for total variation is utilized with
$$
\bar{x}_n^* = \arg\max_{0\le x \le 1} \frac{1}{\hat{\sigma}_T}|T_n(x) - x T_n(1)|,
$$
where $T_n(x)$ is given in \eqref{trace-def}.

Settings 1--4 were used to insert eigenvalue breaks with scaling $b$ chosen to be $1.5$ and $3$. 
The slow decay of the eigenvalues in (b) above was considered. For each setting, sample size and choice of $b$, the break date estimators were applied to joint, first, second and third eigenvalue, and the trace tests. The results are presented in the form of boxplots in Figure \ref{powerSlowIID}. Overall the performance of the joint eigenvalue break date procedure is competitive with its marginal counterparts across all settings. However, the performance of the single eigenvalue break estimation procedures critically depends on the location and the magnitude of the break.

\begin{figure}[h!]
\vspace{-.4cm}
\begin{center}
\includegraphics[width=1\textwidth]{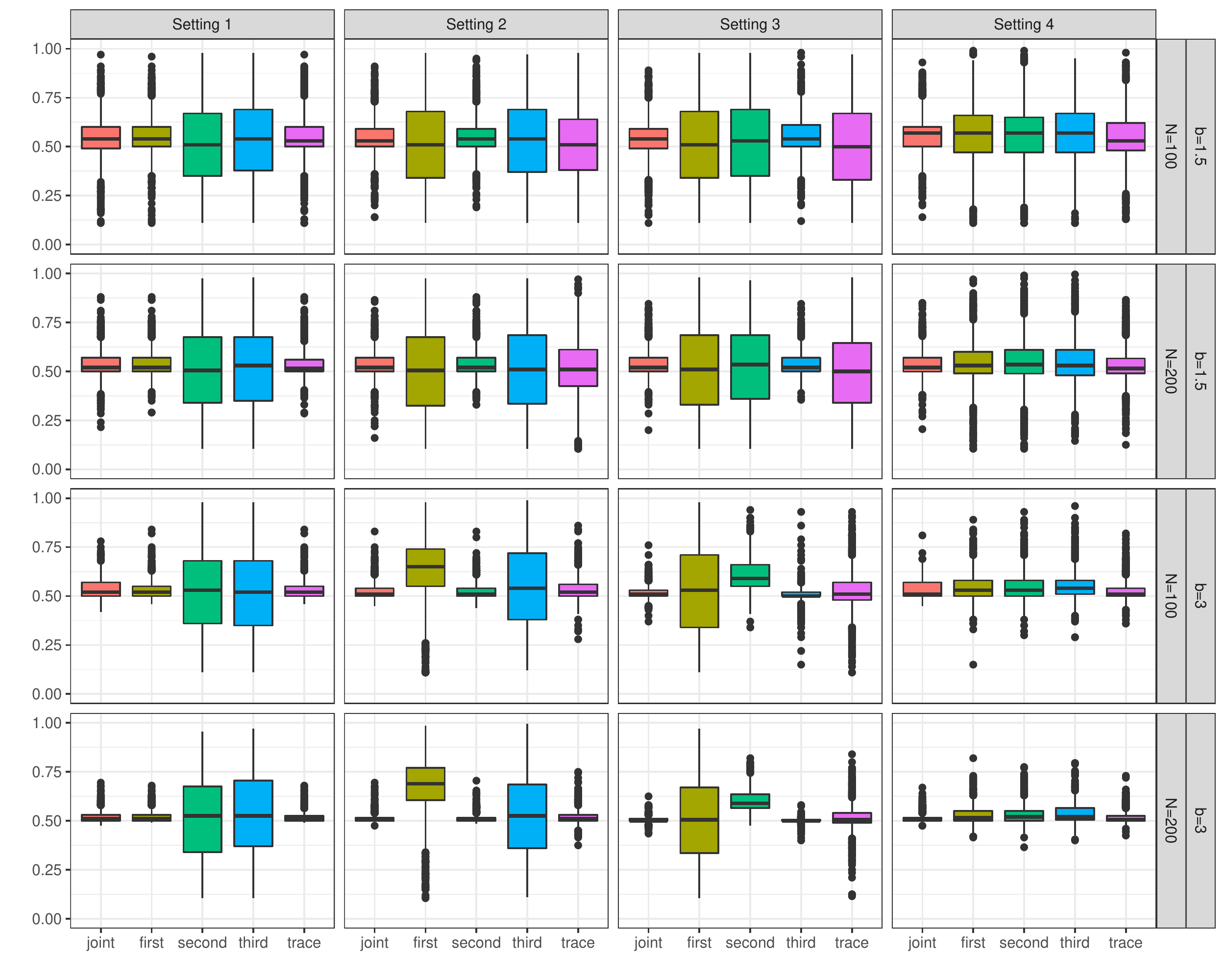}
\end{center}
\vspace{-0.4cm}
\caption{Boxplots for the various break date estimators for structural breaks in slowly decaying eigenvalues as in (b), iid functional curves, three sample sizes and two break magnitudes indexed by $b$.}
\label{powerSlowIID}
\end{figure}

\section{Application to annual temperature profiles}
\label{data}

This section is devoted to demonstrating the practical relevance of the proposed methods using annual temperature curves from various measuring stations in Australia. The raw data consists of daily minimum temperature measurements recorded in degrees Celsius over about one hundred years. For each year, 365 (366 in leap years) raw data points were converted into functional data object using $D=21$ Fourier basis functions. The data is available online and can be downloaded from \url{www.bom.gov.au}.
%
%
Here, attention is focused on a measuring station located at the Post Office in Gayndah, a small town in Queensland. For this particular station, full annual temperature profiles were available from  1894 until 2007, resulting into the $n = 115$ annual curves displayed in Figure \ref{Dataplots}.

Before attempting any structural break analysis for the covariance operator, the effect of potential nonstationarities in the mean function has to be taken into account. This can be done in several ways. Two approaches were discussed in S\"onmez (2018), namely binary segmentation based on the method of Aue et al.\ (2018) and moving average smoothing. Since both methods led to almost identical conclusions in terms of the structural break analysis for the mean curve, thus indicating some robustness with respect to the method of detrending, only results for binary segmentation are reported here. Its application yielded three data segments for which the mean function is reasonably constant. The corresponding breaks were located at $\hat k_1^*=60$ (1953) and $\hat k_2^*=79$ (1972). Plots corroborating the findings are given in Figure~\ref{Dataplots}, which indicate that the minimum temperature curves exhibit a generally increasing trend over the observed period.

\begin{figure}[h]
\vspace{-.4cm}
\begin{center}
\includegraphics[width=.41\textwidth]{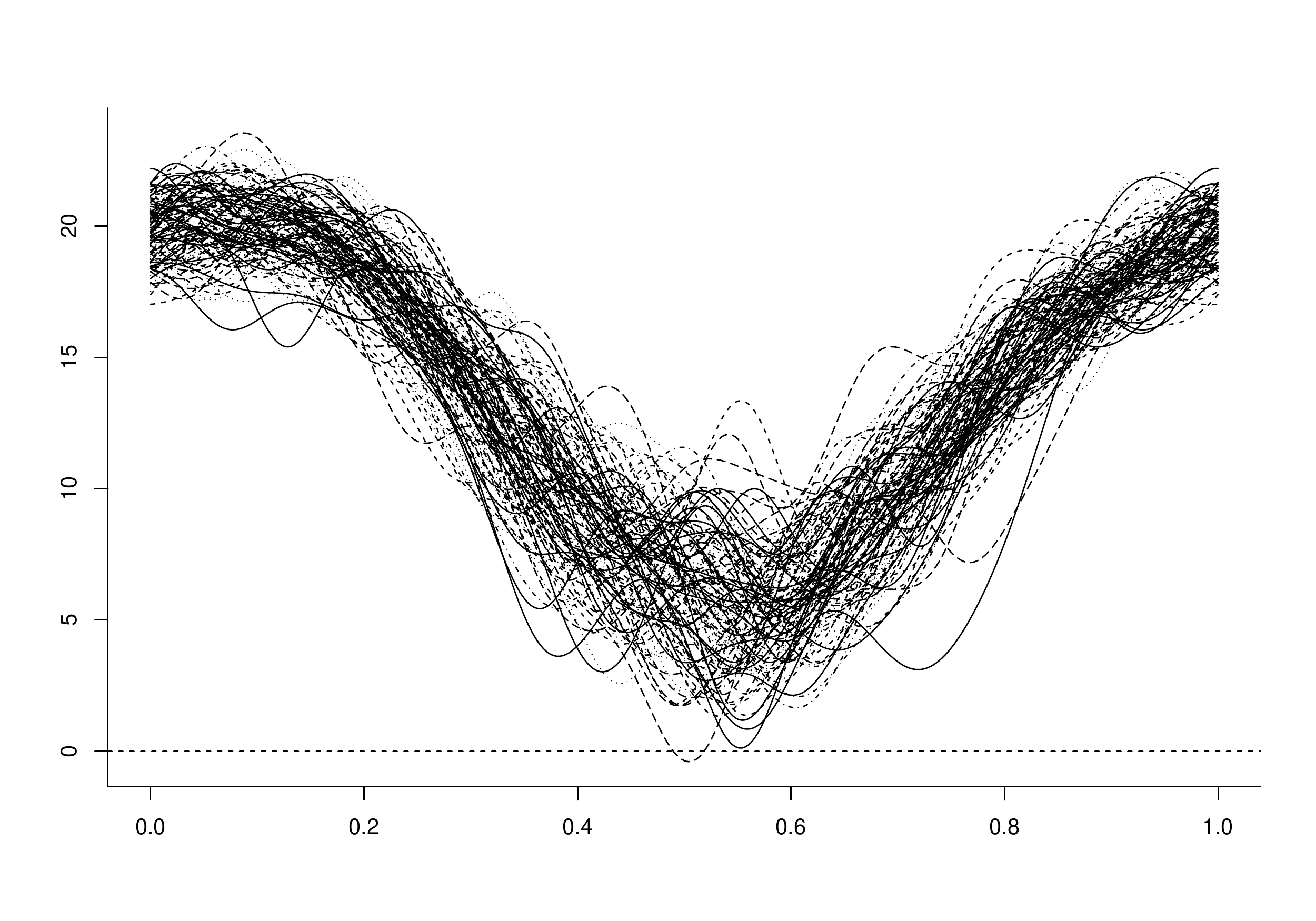}
\includegraphics[width=.41\textwidth]{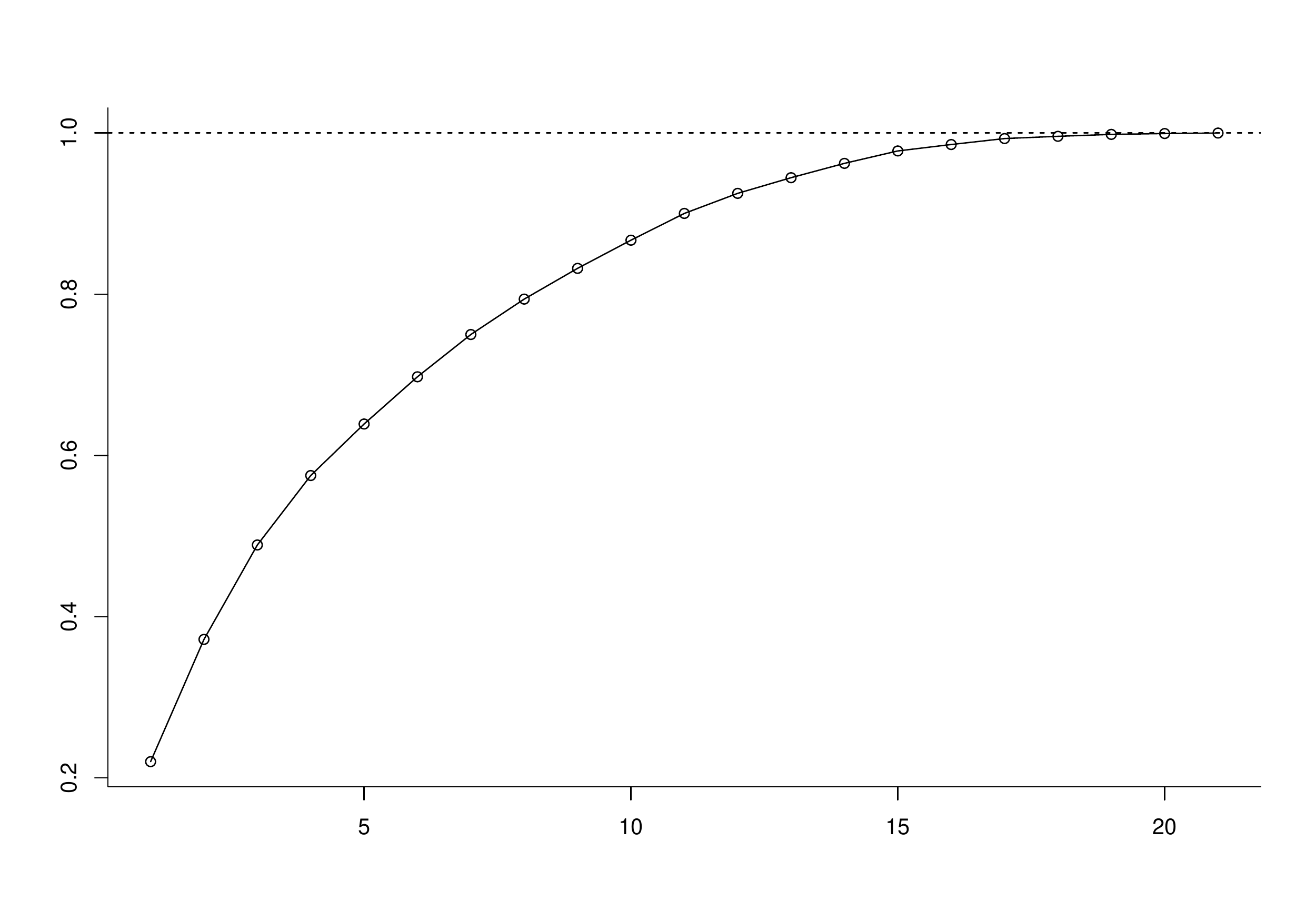} \\[-.3cm]
\includegraphics[width=.41\textwidth]{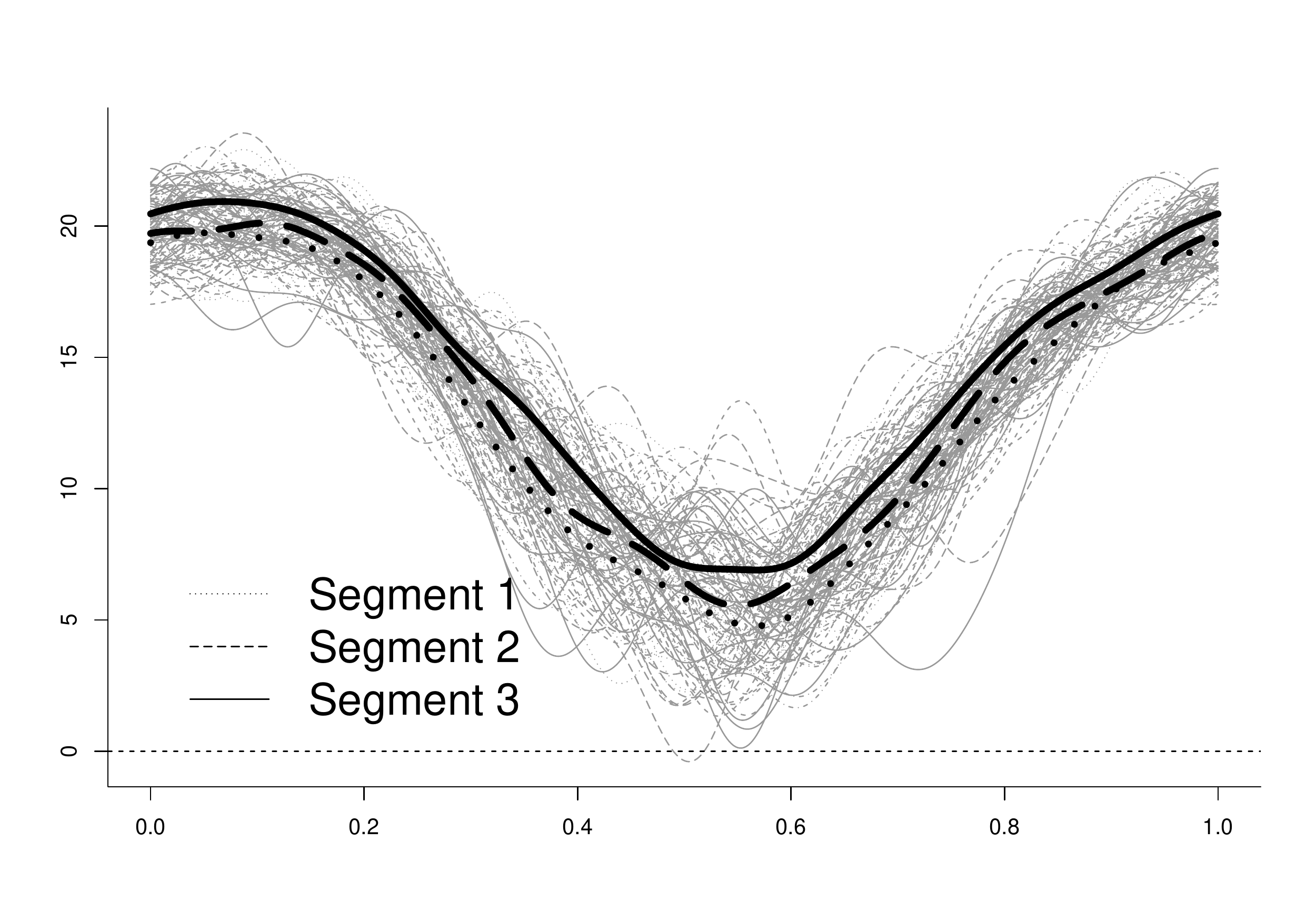}
\includegraphics[width=.41\textwidth]{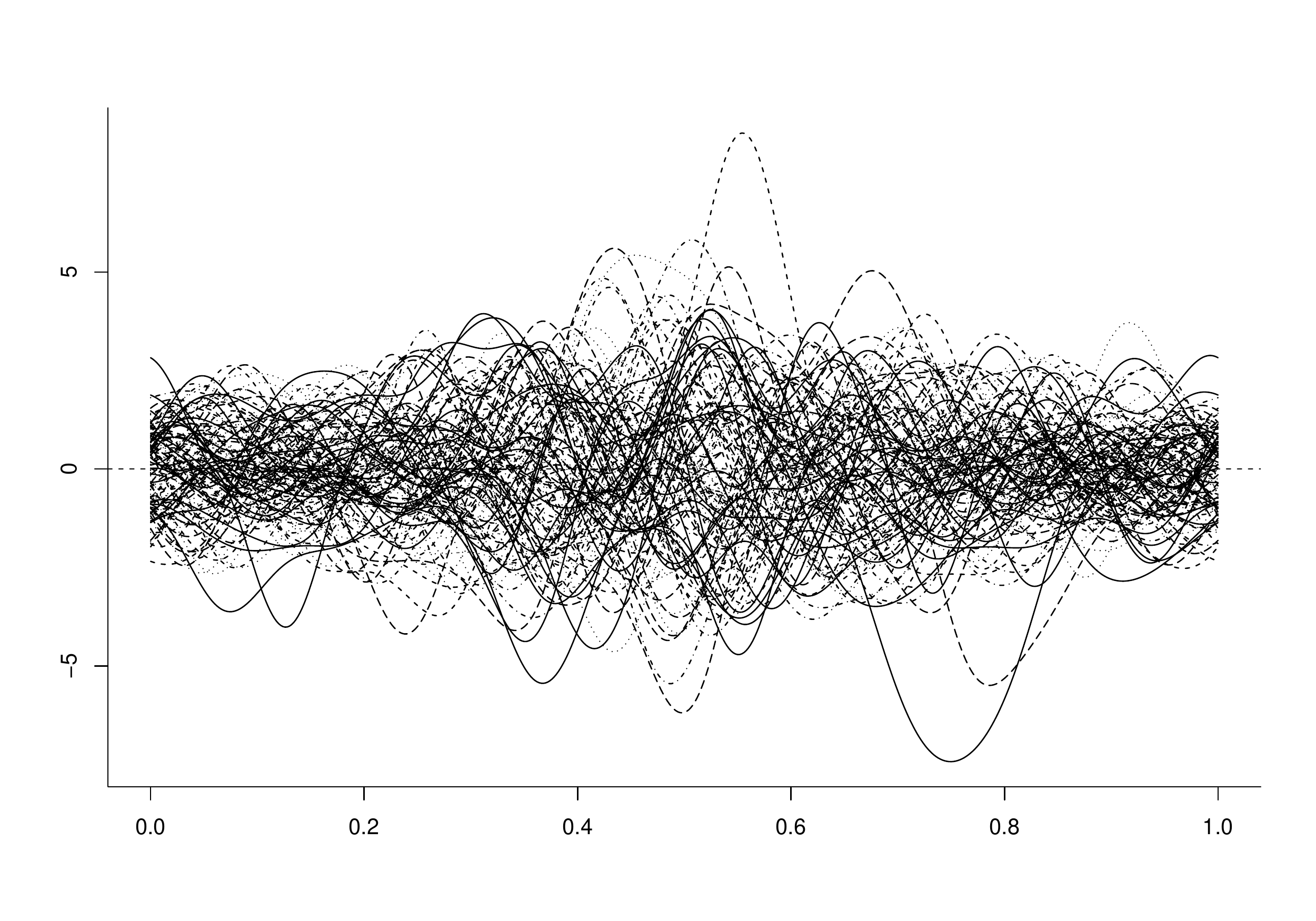} 
\end{center}
\vspace{-0.8cm}
\caption{Upper panel: Time series plot of annual temperature profiles at Gayndah Post Office (left) and scree plot of eigenvalues from the sample covariance operator of the Gayndah Post Office temperature profiles (right). Lower panel: Mean functions of subsegments after binary segmentation using break estimates of mean curves (left) and corresponding demeaned annual temperature curves (right). 
}
\label{Dataplots}
\end{figure}

After detrending, the joint structural break test $J_n(\delta)$ and the trace test $M_n$ were applied. The dimension for jointly testing multiple eigenvalues was chosen based on the total variation explained (TVE) criterion in \eqref{TVE}, setting $v=.85$ so that at least 85\% of the total functional variation was taken into account. As implied by the TVE plot in Figure \ref{Dataplots}, the temperature curves exhibit a slow decay of eigenvalues and $d=10$ was selected. The $p$-value of the joint eigenvalue test was 0.02 
with a break date estimate 1950. 
The test for a structural break in trace led to the same conclusion, the procedure identifying 1950 as the break date estimate. 

\begin{table}[h]
\vspace{.5cm}
\centering
\begin{tabular}{rrrrrrrrr}
\hline
 $j$ & $\hat{\lambda}_j^{(b)}$ & $\hat{\lambda}_j^{(a)}$ & PVE$_j^{(b)}$ & PVE$_j^{(a)}$& TVE$_j^{(b)}$& TVE$_j^{(a)}$ & tr$_j^{(b)}$ & tr$_j^{(a)}$ \\
  \hline
1 & 0.631 & 0.416 & 0.231 & 0.223 & 0.231 & 0.223 & 0.631 & 0.416 \\
  2 & 0.464 & 0.297 & 0.170 & 0.159 & 0.401 & 0.382 & 1.095 & 0.713 \\
  3 & 0.328 & 0.220 & 0.120 & 0.118 & 0.521 & 0.500 & 1.423 & 0.933 \\
  4 & 0.253 & 0.152 & 0.093 & 0.081 & 0.614 & 0.581 & 1.676 & 1.085 \\
  5 & 0.210 & 0.131 & 0.077 & 0.070 & 0.691 & 0.651 & 1.886 & 1.216 \\
  6 & 0.149 & 0.112 & 0.055 & 0.060 & 0.745 & 0.711 & 2.035 & 1.328 \\
  7 & 0.142 & 0.090 & 0.052 & 0.048 & 0.797 & 0.760 & 2.177 & 1.419 \\
  8 & 0.104 & 0.082 & 0.038 & 0.044 & 0.835 & 0.803 & 2.280 & 1.500 \\
  9 & 0.099 & 0.079 & 0.036 & 0.042 & 0.872 & 0.846 & 2.379 & 1.579 \\
  10 & 0.084 & 0.064 & 0.031 & 0.034 & 0.902 & 0.880 & 2.463 & 1.643 \\
   \hline
\end{tabular}
\caption{Table of pre- and post-break eigenvalues $\hat{\lambda}_j^{(b)}$ and $\hat{\lambda}_j^{(a)}$, proportion of variation explained $\mathrm{PVE}_j=\hat{\lambda}_j/\mathrm{tr}_D$, 
total variation explained $\mathrm{TVE}_j=\mathrm{tr}_j/\mathrm{tr}_D$ 
and tr$_j = \sum_{j^\prime=1}^j\hat{\lambda}_{j^\prime}$.
}
\label{B&A table}
\end{table}

It is evident from Table \ref{B&A table} that estimates of all eigenvalues decreased, often dramatically, after the estimated break location in 1950. This decrease also led to a significant structural break in the trace of the covariance operator. While the annual temperature curves had total variation of about $2.46$ degrees Celsius before 1950, this variation subsequently shrank to $1.643$ degrees Celsius. Taking mean function and covariance operator analyses together, it is seen that increasing annual minimum temperature profiles are accompanied by shrinking variation for this data set. To elucidate further,
%
consider the variation explained by the first $d$ eigendirections around the average minimum temperature curves before and after 1950. Total variation around the mean minimum temperature curve before 1950 can be represented as $\hat{\mu}_1\pm\sum_{j=1}^d\sqrt{\hat{\lambda}_j^{(b)}}\hat{\varphi}_j$, the superscript $(b)$ signifying ``before''. Similarly total variation around the most recent average minimum temperature curve can be calculated as $\hat{\mu}_3\pm\sum_{j=1}^d\sqrt{\hat{\lambda}_j^{(a)}}\hat{\varphi}_j$, the superscript $(a)$ signifying ``after''. Here, $\hat{\lambda}_j^{(b)}$ and $\hat{\lambda}_j^{(a)}$ denote the $j$th eigenvalue of the covariance operator of the temperature curves before and after 1950, respectively. It is seen in Figure \ref{TempVar} that the annual minimum temperatures are rising while annual temperature variation is declining. This phenomenon is most pronounced in the months comprising the Australian winter season. As further visual evidence, Figure \ref{TempVar} displays the estimated pre- and post-break covariance kernels. Most of the differences can be seen to be along the diagnoal and during the middle of the year.

\begin{figure}[h]
\vspace{-.4cm}
\begin{center}
\includegraphics[width=.6\textwidth]{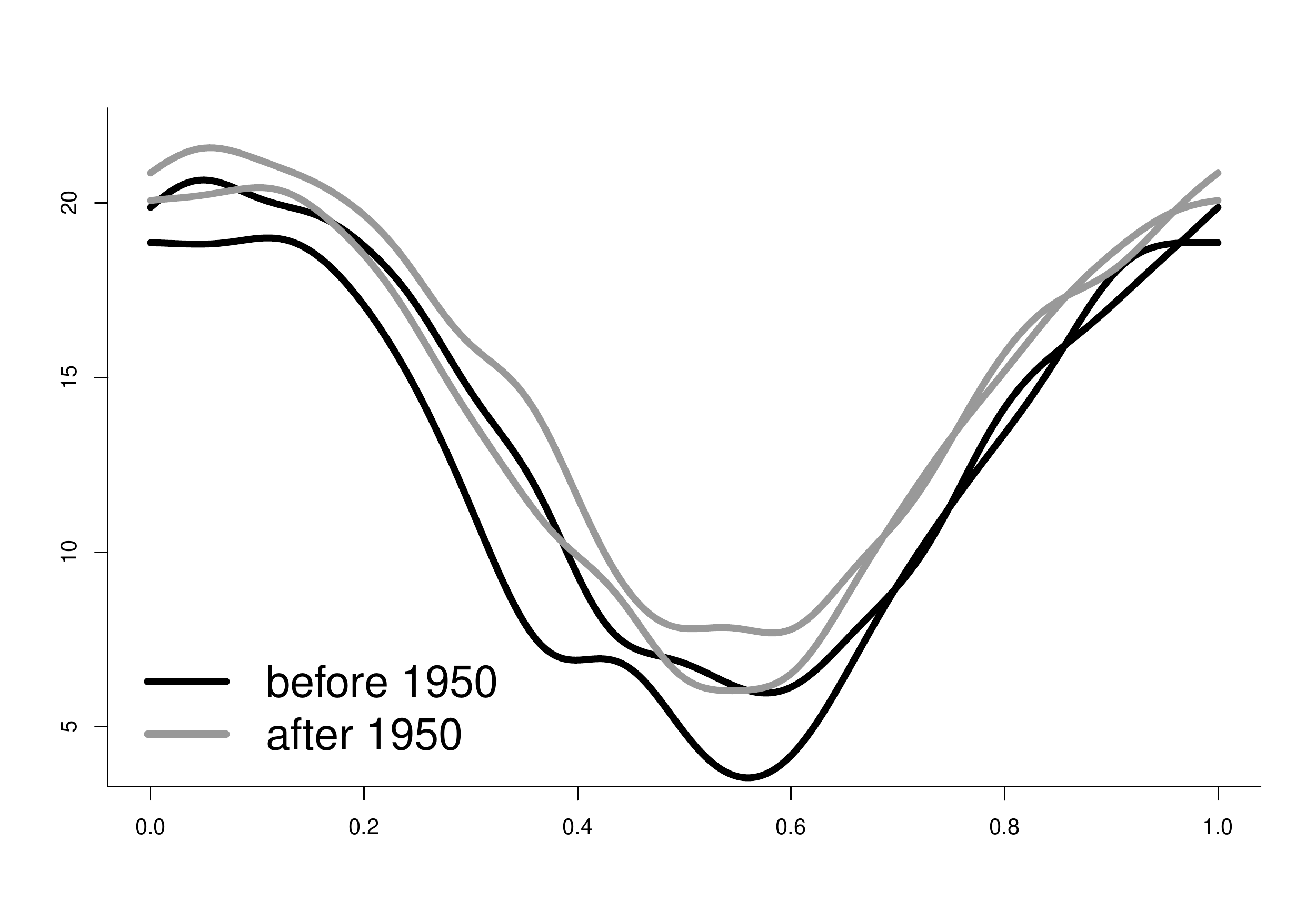} \\
\includegraphics[width=.45\textwidth]{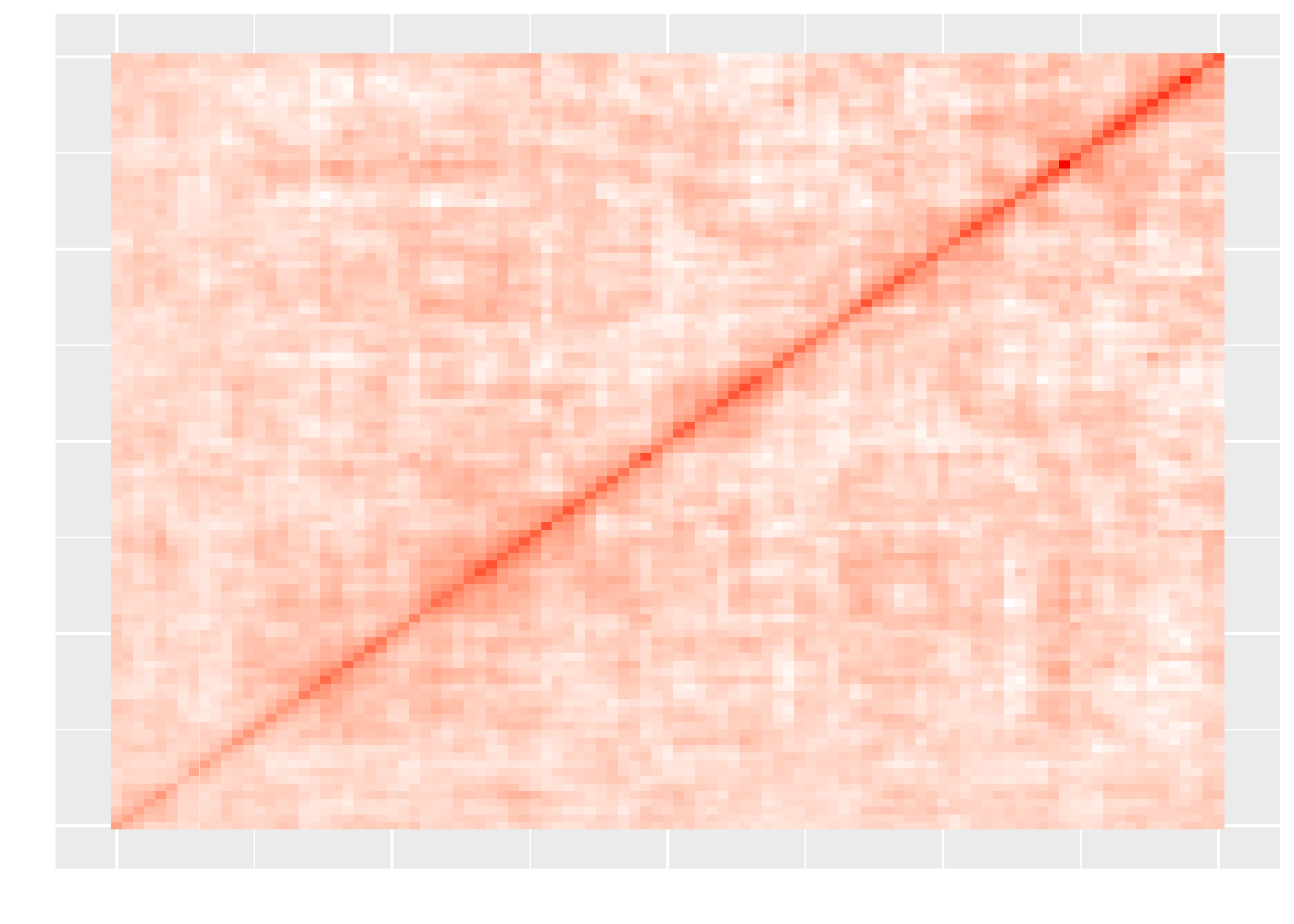}
\includegraphics[width=.45\textwidth]{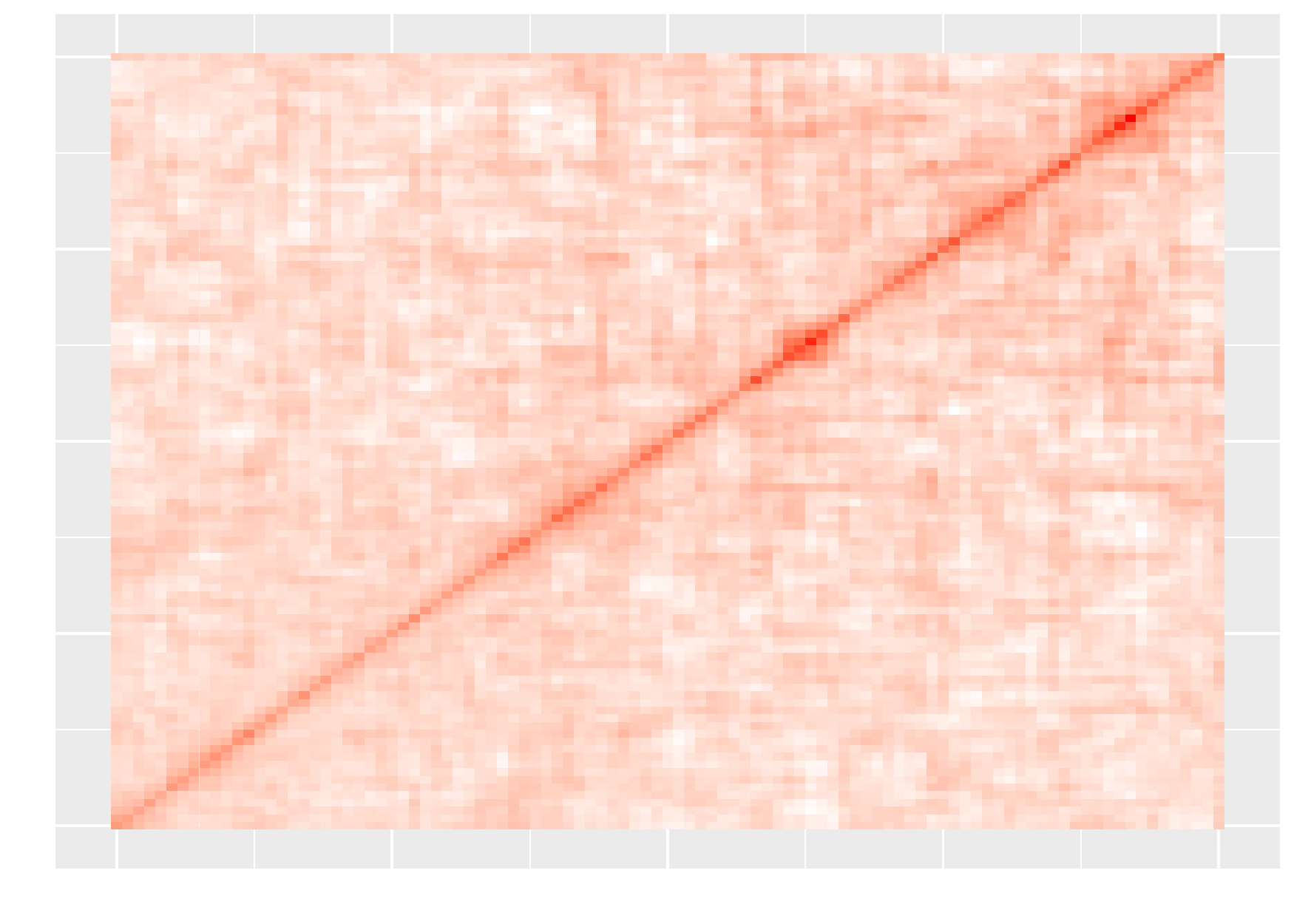}
\end{center}
\vspace{-0.8cm}
\caption{Upper panel: Total variation around the mean annual temperature curve for Gayndah Post Office before (in black) and after (in gray) the estimated break data 1950. Lower panel: Heat map of pre-break (left) and post-break (right) sample covariance kernel restricted to the middle third of the year roughly corresponding to the Australian winter season.}
\label{TempVar}
\end{figure}

\begin{table}[H]
\vspace{.5cm}
\centering
\begin{tabular}{ccc}
  \hline
 Eigenvalue & $p$-value & break date estimate (year) \\
  \hline
  \phantom{1}1 & 0.064 &  1908 \\
  \phantom{1}2 & 0.095 &  1944 \\
  \phantom{1}3 & 0.043 &  1953 \\
  \phantom{1}4 & 0.248 &  1919 \\
  \phantom{1}5 & 0.002 &  1943 \\
  \phantom{1}6 & 0.461 &  1929 \\
  \phantom{1}7 & 0.607 &  1950 \\
  \phantom{1}8 & 0.323 &  1972 \\
  \phantom{1}9 & 0.613 &  1969 \\
  10 & 0.879 &  1906 \\
   \hline
\end{tabular}
\label{DataInd}
\caption{Results for testing changes in a single eigenvalues for the annual temperature curves for Gayndah Post Office.}
\end{table}

A natural follow-up question is if there are any dominant modes of variation driving the observed diminishing variation. To check this, individual tests $I_{j,n}(\delta)$ were applied for $j=1,\ldots,10$. The results are presented in Table \ref{DataInd}. Adjusting nominal levels based on multiple testing, there is some evidence for individual breaks but none, with the possible exception of $j=5$, exerted a dominant influence, indicating that differences across all directions compound to yield the strong rejection of the null observed for the trace test.

The remainder of this section focuses on a short discussion on whether the breaks related to the spectrum of the covariance operator were accompanied by simultaneous breaks in the corresponding  eigenfunctions. Dating and detecting structural breaks in the eigenfunctions, either jointly or marginally, is a rather complicated problem deserving of its own manuscript. Here, the problem will only be briefly approached from the point of view of testing the equality of covariance operators in functional samples, as in Fremdt et al.\ (2012). These authors introduced a two-sample test which obeys a chi-squared asymptotic distribution with known degrees of freedom. To make use of these results in the present analysis, the (joint) effect of breaks in the eigenvalues were taken into account by standardizing the functional sample $X_1,\dots,X_n$ through the transformation
$$
Y_i = 
 \sum_{j=1}^d \frac{1}{\sqrt{\hat{\lambda}_j^{(\ell_i)}}}\langle X_i, \hat{\varphi_j}\rangle\hat{\varphi}_{j},
$$
where $\ell_i=b$ for $i=1,\ldots,k^*$ and $\ell_i=a$ for $i=k^*+1,\ldots,n$, and $\hat{\varphi}_1,\ldots,\hat\varphi_D$ the sample eigenfunctions. The transformed data was then split up into two subsamples using the estimated break data $\hat k^*$ (1950). Since eigenvalue breaks have been removed from $Y_1,\ldots,Y_n$, the two subsamples should have equal covariance structure if there was no break in the eigenfunctions. The test, indeed, yielded a $p$-value of $0.83$ indicating covariance homogeneity. There was thus strong evidence that only the eigenvalues and total variation of the annual minimum temperature curves at Gayndah Post Office were subject to structural breaks but that these breaks did not extend to the eigenfunctions. This indicates stability of seasonal patterns outside those affecting their magnitude. For this particular data set much of the structural break was captured by an increase in minimum temperatures during the Australian winter. It should finally be mentioned that the test of Fremdt et al.\ (2012) was designed for independent Gaussian functions. The authors discussed that in the case of violated normality and independence assumptions, their test was rather conservative in the sense that the likelihood of falsely not rejecting the null hypothesis was narrow. The large $p$-value obtained here adds further support to the conclusion of homogenous eigenfunctions.


\section{Conclusion}\label{sec:conc}
Several methods were proposed for detecting and localizing structural breaks in the covariance operator of a functional time series based on measuring the fluctuations of partial sample estimates of its eigenvalues and trace. Collectively the proposed tests provide a differential procedure for determining how variability in functional time series changes, whether it be in specific eigenvalues, several eigenvalues, or in the trace of the operator. A simulation study showed that these methods perform well even with fairly small samples. In an application to functional data derived from daily minimum temperatures taken in Australia, strong evidence was found that, after taking into account changes in the level, the variability of these curves significantly decreases, and moreover that this change appears to be across all eigenvalues. The change in variability also does not seem to affect the principal components/eigenfunctions, but a rigorous test for changes in the eigenfunctions is left as a possible direction for future research.

\appendix

\section{Proof of Theorems \ref{thm-joint} and \ref{thm-trace}}\label{proofs}

The proof of Theorem \ref{thm-joint} will be developed as a sequence of four lemmas. Throughout $k_i$, $i\ge1$, is used to denote unimportant absolute numeric constants. Under model \eqref{model-1} and Assumption \ref{edep} it may be assumed without loss of generality that $\mu=0$. Define
$$
\tilde{C}_x(t,t^\prime) = \frac{1}{n} \sum_{i=1}^{\lfloor nx \rfloor } X_i(t) X_i(t^\prime),
\qquad x; t,t^\prime\in[0,1].
$$

\begin{lemma}\label{mean-rem}
Under the conditions of Theorem \ref{thm-joint},
$$
\sup_{\delta \le x \le 1} \| \hat{C}_x - \tilde{C}_x \| = O_P\left(\frac{1}{n}\right).
$$
\end{lemma}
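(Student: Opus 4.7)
The plan is to reduce the lemma to a simple algebraic expansion together with standard partial-sum bounds guaranteed by Assumption \ref{edep}. Since the setup allows us to take $\mu = 0$, the observations satisfy $X_i = \varepsilon_i$.

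First, I would expand the difference directly. Writing $S_k(t) = \sum_{i=1}^k X_i(t)$ so that $\bar{X} = S_n/n$, the cross terms from $(X_i - \bar{X})(X_i - \bar{X})$ give
\begin{align*}
\hat{C}_x(t,t') - \tilde{C}_x(t,t')
= -\frac{\bar{X}(t')}{n} S_{\lfloor nx\rfloor}(t) - \frac{\bar{X}(t)}{n} S_{\lfloor nx\rfloor}(t') + \frac{\lfloor nx\rfloor}{n}\,\bar{X}(t)\bar{X}(t').
\end{align*}

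Next, I would take the $L^2([0,1]^2)$ norm and apply the triangle inequality together with the tensor identity $\|f \otimes g\| = \|f\|\,\|g\|$ to obtain
\begin{align*}
\sup_{\delta \le x \le 1}\|\hat{C}_x - \tilde{C}_x\|
\le \frac{2}{n}\,\|\bar{X}\| \sup_{\delta \le x \le 1}\|S_{\lfloor nx\rfloor}\| + \|\bar{X}\|^2.
\end{align*}

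The third step is to control the two stochastic factors on the right. Assumption \ref{edep} makes $(\varepsilon_i)$ an $L^p$-$m$-approximable sequence with $p > 4$, which in particular implies the functional CLT in $L^2([0,1])$ (see H\"ormann and Kokoszka, 2010; Jirak, 2013). Therefore $\sup_{0 \le x \le 1}\|S_{\lfloor nx\rfloor}\| = O_P(\sqrt{n})$, and consequently $\|\bar{X}\| = n^{-1}\|S_n\| = O_P(n^{-1/2})$. Substituting these into the bound above gives
\begin{align*}
\sup_{\delta \le x \le 1}\|\hat{C}_x - \tilde{C}_x\|
= \frac{2}{n}\,O_P(n^{-1/2})\,O_P(n^{1/2}) + O_P(n^{-1}) = O_P(n^{-1}),
\end{align*}
which is the desired conclusion. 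There is no genuine obstacle here — the only non-elementary input is the $\sqrt{n}$-rate partial-sum bound, which is a standard consequence of the weak dependence assumption already in force and is also implicit in the proof of Theorem \ref{thm-joint} that this lemma serves.
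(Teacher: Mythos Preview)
Your proposal is correct and constitutes exactly the ``standard arguments'' the paper alludes to but omits; the algebraic expansion of $\hat{C}_x-\tilde{C}_x$ into tensor products of $\bar X$ and $S_{\lfloor nx\rfloor}$, followed by the $O_P(\sqrt{n})$ maximal bound on partial sums under $L^p$-$m$-approximability, is the natural route. The paper provides no details for comparison, but the partial-sum control you invoke is precisely the mechanism used later in Lemma~\ref{tc-rn}.
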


\begin{proof}
The proof follows from standard arguments, some of which appear in subsequent lemmas, and so details are omitted.
\end{proof}

For $x\in[0,1]$, let $\tilde{\lambda}_j(x)$ and $\tilde{\varphi}_{j,x}$ denote the ordered eigenvalues and orthonormal eigenfunctions of the integral operator with kernel $\tilde{C}_x$. Since the eigenfunctions $\varphi_j,\hat{\varphi}_{j,x},$ and $\tilde{\varphi}_{j,x}$ are unique only up to a sign, assume without loss of generality that $\langle \varphi_j,\hat{\varphi}_{j,x} \rangle \ge 0$ and $\langle \varphi_j,\tilde{\varphi}_{j,x} \rangle \ge 0$.

\begin{lemma}\label{lam-mr}
Under the conditions of Theorem \ref{thm-joint},
$$
\sup_{\delta \le x \le 1} |\hat{\lambda}_i(x) - \tilde{\lambda}_i(x) | = O_P\left(\frac{1}{n}\right),
$$
for $j=1,\ldots,d$ with $d$ defined in Assumption \ref{d-lam}.
\end{lemma}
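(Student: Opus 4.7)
The plan is to reduce this to a direct application of an eigenvalue perturbation inequality together with the operator-norm bound already furnished by Lemma \ref{mean-rem}.

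First, I would recall that both $\hat{c}_x$ and $\tilde{c}_x$ are self-adjoint, positive semidefinite, compact (in fact, trace class) integral operators on $L^2([0,1])$, since their kernels $\hat{C}_x$ and $\tilde{C}_x$ are symmetric by construction. Hence their ordered eigenvalues $\hat{\lambda}_j(x)$ and $\tilde{\lambda}_j(x)$ are well defined and nonnegative for every $x \in [\delta,1]$ and every $j \geq 1$.

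Next, I would invoke the classical Weyl-type perturbation inequality for self-adjoint compact operators: for any pair of self-adjoint compact operators $A,B$ on a Hilbert space,
\[
\sup_{j \ge 1}|\lambda_j(A) - \lambda_j(B)| \le \|A-B\|_{\mathrm{op}},
\]
where $\|\cdot\|_{\mathrm{op}}$ denotes the operator norm. Applied to $A=\hat{c}_x$ and $B=\tilde{c}_x$ and taken uniformly in $x$, this gives
\[
\sup_{\delta \le x \le 1} \max_{1 \le j \le d}|\hat{\lambda}_j(x) - \tilde{\lambda}_j(x)| \le \sup_{\delta \le x \le 1} \|\hat{c}_x-\tilde{c}_x\|_{\mathrm{op}}.
\]
For integral operators on $L^2([0,1])$, the operator norm is dominated by the Hilbert--Schmidt norm, which for a kernel $K$ coincides with its $L^2([0,1]^2)$ norm $\|K\|$. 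Therefore
\[
\sup_{\delta \le x \le 1} \|\hat{c}_x-\tilde{c}_x\|_{\mathrm{op}} \le \sup_{\delta \le x \le 1} \|\hat{C}_x-\tilde{C}_x\|.
\]

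Finally, Lemma \ref{mean-rem} states exactly that the right-hand side is $O_P(1/n)$. Combining the two displays yields the claim for each $j=1,\ldots,d$ uniformly. The proof is essentially mechanical once the perturbation bound is in place; the only mild subtlety is making sure the Weyl inequality is invoked in its self-adjoint compact-operator form (as opposed to the matrix version), but this is a well-known extension and no real obstacle arises. In particular, Assumption \ref{d-lam} on the spacing of the population eigenvalues is not needed at this step --- eigenvalue stability (unlike eigenfunction stability) holds without a spectral gap, and the gap condition will only be needed later to control $\|\hat{\varphi}_{j,x} - \tilde{\varphi}_{j,x}\|$ or the deviations of eigenvalues from their population targets.
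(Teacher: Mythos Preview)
Your argument is correct and coincides with the paper's proof: the paper simply cites Lemma~2.2 of Horv\'ath and Kokoszka (2012), which is precisely the eigenvalue perturbation bound $|\hat{\lambda}_j(x)-\tilde{\lambda}_j(x)|\le\|\hat{C}_x-\tilde{C}_x\|$, and then invokes Lemma~\ref{mean-rem}. Your additional remarks (operator norm dominated by Hilbert--Schmidt norm, and that Assumption~\ref{d-lam} is not actually used here) are accurate elaborations of the same one-line argument.
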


\begin{proof}
Lemma 2.2 of Horv\'ath and Kokoszka (2012) 
gives
\begin{align}\label{gohberg}
|\hat{\lambda}_j(x) - \tilde{\lambda}_j(x) | \le  \| \hat{C}_x - \tilde{C}_x \|,
\end{align}
so that the result follows from Lemma \ref{mean-rem}.
\end{proof}

\begin{lemma}\label{tc-rn}
Under the conditions of Theorem \ref{thm-joint},
$$
\sup_{\delta \le x \le 1} \bigg\|  \tilde{C}_x- \frac{\lfloor nx \rfloor }{n} C \bigg\| = O_P\left(\frac{1}{\sqrt{n}}\right).
$$
\end{lemma}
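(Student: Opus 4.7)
The plan is to rewrite the quantity inside the norm as a normalized partial sum of centered Hilbert-space valued random elements and then invoke a functional invariance principle. Concretely, since we may assume $\mu=0$ and $(X_i)$ is stationary with covariance kernel $C$, setting $Y_i := X_i \otimes X_i - C$, one has
\[
\tilde{C}_x - \frac{\lfloor nx\rfloor}{n}C = \frac{1}{n}\sum_{i=1}^{\lfloor nx\rfloor} Y_i,
\]
so the claim reduces to showing $\sup_{0\le x\le 1}\|\sum_{i=1}^{\lfloor nx\rfloor} Y_i\| = O_P(\sqrt{n})$, where $\|\cdot\|$ now denotes the norm on $\mathcal{H}:=L^2([0,1]^2)$. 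Each $Y_i$ is centered in $\mathcal{H}$ and, by the bound $\|Y_i\|\le \|X_i\|^2+\|C\|$ together with Assumption~\ref{edep}(b) for $p>4$, satisfies $\mathbb{E}[\|Y_i\|^{p/2}]<\infty$ with $p/2>2$.

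The first substantive step is to verify that $(Y_i)$ inherits the $L^{p/2}$-$m$-approximable structure from the $L^p$-$m$-approximability of $(\varepsilon_i)$. With $X_{i,\ell}:=\varepsilon_{i,\ell}$ and $Y_{i,\ell}:=X_{i,\ell}\otimes X_{i,\ell}-C$, the algebraic identity
\[
X_i\otimes X_i - X_{i,\ell}\otimes X_{i,\ell} = (X_i-X_{i,\ell})\otimes X_i + X_{i,\ell}\otimes(X_i-X_{i,\ell}),
\]
combined with $\|f\otimes g\|=\|f\|\,\|g\|$ and H\"older's inequality with conjugate exponents $2$ and $2$ applied to the $p/2$-th moment (equivalently, Cauchy--Schwarz applied after splitting $p=p/2+p/2$ through the $p$-th moments of $\|X_i\|$ and $\|X_i-X_{i,\ell}\|$), gives
\[
\big(\mathbb{E}[\|Y_i-Y_{i,\ell}\|^{p/2}]\big)^{2/p} \le 2\big(\mathbb{E}[\|X_i\|^p]\big)^{1/p}\big(\mathbb{E}[\|X_i-X_{i,\ell}\|^p]\big)^{1/p}.
\]
Summability of $(\mathbb{E}[\|\varepsilon_i-\varepsilon_{i,\ell}\|^p])^{1/p}$ over $\ell$ then yields summability of the analogous quantities for $(Y_i)$, so $(Y_i)$ is $L^{p/2}$-$m$-approximable in $\mathcal{H}$.

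The second step is to apply a functional invariance principle for weakly dependent sequences with values in a separable Hilbert space, such as that of Jirak~(2013), which requires exactly a finite $q$-th moment with $q>2$ and $L^q$-$m$-approximability. This delivers weak convergence of $S_n(x):=n^{-1/2}\sum_{i=1}^{\lfloor nx\rfloor}Y_i$ in the Skorohod space $D([0,1],\mathcal{H})$ to a centered Gaussian limit with continuous sample paths. Since the functional $S\mapsto\sup_{0\le x\le 1}\|S(x)\|$ is continuous on $D([0,1],\mathcal{H})$ at continuous functions, the continuous mapping theorem gives $\sup_{0\le x\le 1}\|S_n(x)\|=O_P(1)$. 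Multiplying by $n^{-1/2}$ and restricting to $x\in[\delta,1]\subset[0,1]$ yields the stated rate.

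The main obstacle is simply packaging the $L^{p/2}$-$m$-approximability of the tensor-product sequence $(Y_i)$ so as to cleanly invoke an off-the-shelf invariance principle in the Hilbertian setting. The H\"older step above has to be executed carefully so that the surviving exponent $p/2$ still exceeds $2$, which is exactly why Assumption~\ref{edep}(b) imposes the nearly sharp condition $p>4$; everything else reduces to standard chaining and blocking arguments that go through identically when $L^2([0,1])$ is replaced by $L^2([0,1]^2)$, since both are separable Hilbert spaces.
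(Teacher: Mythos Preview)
Your proof is correct and follows the paper's argument almost verbatim through the crucial step: rewriting $\tilde{C}_x-\frac{\lfloor nx\rfloor}{n}C$ as a normalized partial sum of the centered tensor products $Y_i=X_i\otimes X_i-C$ in $L^2([0,1]^2)$, and then verifying that $(Y_i)$ is $L^{p/2}$-$m$-approximable via the identity $X_i\otimes X_i-X_{i,\ell}\otimes X_{i,\ell}=(X_i-X_{i,\ell})\otimes X_i+X_{i,\ell}\otimes(X_i-X_{i,\ell})$, the norm factorization $\|f\otimes g\|=\|f\|\,\|g\|$, and Cauchy--Schwarz. This is exactly what the paper does (with $\rho_i$ in place of your $Y_i$).

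The only genuine difference is in the closing step. The paper appeals to a maximal \emph{moment} inequality of Berkes et al.\ (2011) (noting that their proof carries over to any separable Hilbert space) to obtain
\[
\mathbb{E}\Bigl[\Bigl(\sup_{0\le x\le 1}\Bigl\|\sum_{i=1}^{\lfloor nx\rfloor}\rho_i\Bigr\|\Bigr)^{p/2}\Bigr]\le k_0\,n^{p/4},
\]
and then applies Chebyshev. You instead invoke the full Hilbertian invariance principle of Jirak (2013) and pull the $O_P(1)$ bound out via the continuous mapping theorem. Both routes are valid; the paper's is marginally more direct and actually yields a moment bound (useful later), whereas yours deploys slightly heavier machinery to obtain only the $O_P$ rate that the lemma states. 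Either way the substance of the argument---the $L^{p/2}$-$m$-approximability of the tensor sequence, which hinges precisely on $p>4$---is identical.
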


\begin{proof}
By  definition of $\tilde{C}_x$,
\begin{align}\label{l2-1}
\tilde{C}_x(t,t^\prime) - \frac{\lfloor nx \rfloor}{n}C(t,t^\prime)
= \frac{1}{n} \sum_{i=1}^{\lfloor nx \rfloor} \rho_i(t,t^\prime)
\qquad x;t,t^\prime\in[0,1],
\end{align}
where $\rho_i(t,t^\prime) =X_i(t)X_i(t^\prime) - \mathbb{E}[X_0(t)X_0(t^\prime)]$. The Cauchy--Schwarz inequality and Assumption \ref{edep} yield $E[\|\rho_i\|^2] < \infty$, and hence all $\rho_i$ are a.s.\ elements of $L^2([0,1]^2)$. Define
$$
\rho^{(m)}_i(t,t^\prime) = X_{i,m}(t)X_{i,m}(t^\prime) - \mathbb{E}[X_0(t)X_0(t^\prime)],
\qquad t,t^\prime\in[0,1].
$$
Then, ($\rho_i^{(m)})$ is a centered, $m$-dependent sequence. If $q=p/2$ with $p$ given in Assumption \ref{edep}, then triangle and Minkowski's inequalities imply
\begin{align}\label{l2-2}
\big(\mathbb{E}[\|\rho_i - \rho_i^{(m)} \|^q]\big)^{1/q}
&\le \big( \mathbb{E}[\{ \|X_i\otimes(X_i-X_{i,m})\| + \|(X_i-X_{i,m})\otimes X_{i,m}\| \}^q]\big)^{1/q} \\
&\le  \big( \mathbb{E}[ \|X_i\otimes(X_i-X_{i,m})\|^q]\big)^{1/q}
+ \big(\mathbb{E}[\|(X_i-X_{i,m})\otimes X_{i,m}\|^q]\big)^{1/q}. \notag
\end{align}
The definition of the norm in $L^2([0,1])$ gives $\|X_i\otimes(X_i-X_{i,m})\|= \|X_i\|\|X_i-X_{i,m}\|$. Hence, Cauchy--Schwarz inequality and stationarity yield that the right-hand side of \eqref{l2-2} is upper-bounded by
\begin{align*}
(\mathbb{E}[\|X_0\|^{2q}])^{1/2q}(\mathbb{E}[\|X_0-X_{0,m}\|^{2q}])^{1/2q}.
\end{align*}
Taken together it follows that
\begin{align}\label{rho-s}
\sum_{m=1}^\infty (\mathbb{E}[\|\rho_i - \rho_i^{(m)}\|^q])^{1/q}
\le (\mathbb{E}[\|X_0\|^{p}])^{1/p} \sum_{m=1}^\infty (E[\|X_0-X_{0,m}\|^{p}])^{1/p} < \infty
\end{align}
according to Assumption \ref{edep}. Examining the proofs of Proposition 4 and Corrollary 1 in Berkes et al. (2011), it is seen that their results hold for arbitrary separable Hilbert space-valued random variables $\rho_i$ defined as Bernoulli shifts satisfying \eqref{rho-s}. Thus,
$$
E\bigg[\bigg(\sup_{0 \le x \le 1} \bigg\| \sum_{i=1}^{\lfloor nx \rfloor} \rho_i \bigg\| \bigg)^q\bigg]
\le k_0 n^{q/2}.
$$
Combining this result with \eqref{l2-1} and Chebyshev's inequality implies the assertion of the lemma.
\end{proof}

\begin{lemma}\label{main-approx}
Under the conditions of Theorem \ref{thm-joint}, for all $x\in [0,1]$,
$$
\tilde{\lambda}_j(x) - \frac{\lfloor nx \rfloor}{n} \lambda_j =
\bigg\|\bigg(\tilde C_x-\frac{\lfloor nx\rfloor}n C\bigg)\varphi_j\otimes\varphi_j\bigg\|^2
+ R_{j,n}(x),
$$
where
$$
\sup_{\delta \le x \le 1 } |R_{j,n}(x)| = O_P\left(\frac{1}{n}\right).
$$
\end{lemma}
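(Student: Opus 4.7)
I would treat the claim as the first-order perturbation expansion of the simple eigenvalue $\tilde\lambda_j(x)$ of the self-adjoint compact integral operator $\tilde c_x$ around the $j$-th eigenvalue $(\lfloor nx\rfloor/n)\lambda_j$ of $(\lfloor nx\rfloor/n)c$, with the leading term being the $L^2([0,1]^2)$ pairing $\langle \tilde C_x-(\lfloor nx\rfloor/n)C,\,\varphi_j\otimes\varphi_j\rangle$. The crucial feature that makes the expansion clean is that the reference operator $(\lfloor nx\rfloor/n)c$ shares its eigenfunctions with $c$, so $\varphi_j$ is its $j$-th eigenfunction with eigenvalue $(\lfloor nx\rfloor/n)\lambda_j$. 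Write $E_x=\tilde c_x-(\lfloor nx\rfloor/n)c$; by Lemma \ref{tc-rn} one has $\sup_{\delta\le x\le 1}\|E_x\|=O_P(n^{-1/2})$, and this uniform operator-norm bound will drive everything.

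The first step is a uniform eigenfunction perturbation bound
$$
\sup_{\delta\le x\le1}\|\tilde\varphi_{j,x}-\varphi_j\|=O_P(n^{-1/2}).
$$
Under Assumption \ref{d-lam}, set $\Delta=\min_{1\le k\le d}(\lambda_k-\lambda_{k+1})>0$. For $n$ large, $\lfloor nx\rfloor/n\ge\delta/2$ uniformly on $[\delta,1]$, so the spectrum of $(\lfloor nx\rfloor/n)c$ has gap at least $\delta\Delta/2$ around the $j$-th eigenvalue. On the high-probability event $\{\sup_x\|E_x\|\le\delta\Delta/8\}$, Weyl's inequality confines $\tilde\lambda_j(x)$ to an interval disjoint from all other eigenvalues of $\tilde c_x$, and a $\sin$-$\Theta$ / Gohberg--Goldberg--Kaashoek bound (in the form of Lemma 2.3 of Horv\'ath and Kokoszka, 2012) yields $\|\tilde\varphi_{j,x}-\varphi_j\|\le c_0\|E_x\|/(\delta\Delta)$ uniformly in $x$, giving the required rate.

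The expansion itself then follows from a short algebraic identity. Pairing the eigenequation $\tilde c_x\tilde\varphi_{j,x}=\tilde\lambda_j(x)\tilde\varphi_{j,x}$ with $\varphi_j$ and using self-adjointness of $\tilde c_x$ together with $c\varphi_j=\lambda_j\varphi_j$ gives
$$
\bigl(\tilde\lambda_j(x)-\tfrac{\lfloor nx\rfloor}{n}\lambda_j\bigr)\langle\tilde\varphi_{j,x},\varphi_j\rangle=\langle\tilde\varphi_{j,x},E_x\varphi_j\rangle.
$$
With the sign convention $\langle\tilde\varphi_{j,x},\varphi_j\rangle\ge0$, the parallelogram identity gives $\langle\tilde\varphi_{j,x},\varphi_j\rangle=1-\tfrac12\|\tilde\varphi_{j,x}-\varphi_j\|^2=1+O_P(n^{-1})$ uniformly. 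Splitting
$$
\langle\tilde\varphi_{j,x},E_x\varphi_j\rangle=\langle\varphi_j,E_x\varphi_j\rangle+\langle\tilde\varphi_{j,x}-\varphi_j,E_x\varphi_j\rangle,
$$
and bounding the second term by $\|\tilde\varphi_{j,x}-\varphi_j\|\cdot\|E_x\|\cdot\|\varphi_j\|=O_P(n^{-1})$ uniformly, I arrive at
$$
\tilde\lambda_j(x)-\tfrac{\lfloor nx\rfloor}{n}\lambda_j=\langle E_x\varphi_j,\varphi_j\rangle+R_{j,n}(x),\qquad\sup_{\delta\le x\le1}|R_{j,n}(x)|=O_P(n^{-1}).
$$
Finally, $\langle E_x\varphi_j,\varphi_j\rangle$ is precisely the $L^2([0,1]^2)$ pairing of the centered kernel $\tilde C_x-(\lfloor nx\rfloor/n)C$ with $\varphi_j\otimes\varphi_j$, i.e.\ the term in the lemma statement.

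The main obstacle is upgrading everything from pointwise to uniform in $x\in[\delta,1]$: pointwise each step above is classical, but the sin-theta bound's constant depends on the spectral gap, so the scaling of the reference operator by $\lfloor nx\rfloor/n$ could in principle shrink the gap. This is exactly why the restriction $x\ge\delta$ enters: it guarantees $\lfloor nx\rfloor/n\ge\delta/2$ uniformly and pins the gap to $\delta\Delta/2$, a constant independent of $x$. Combined with the uniform operator-norm control from Lemma \ref{tc-rn}, all the quantitative perturbation inequalities pass to the supremum with $x$-free constants, and the remainder estimate $O_P(n^{-1})$ follows.
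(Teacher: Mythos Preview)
Your proof is correct and follows essentially the same first-order perturbation argument as the paper: both rely on Lemma~\ref{tc-rn} for the uniform $O_P(n^{-1/2})$ control of $\|\tilde C_x-(\lfloor nx\rfloor/n)C\|$, the spectral-gap bound on $[\delta,1]$ from Assumption~\ref{d-lam}, and the Horv\'ath--Kokoszka eigenfunction perturbation inequality to show the remainder is uniformly $O_P(n^{-1})$. Your use of self-adjointness to obtain the identity $(\tilde\lambda_j(x)-\tfrac{\lfloor nx\rfloor}{n}\lambda_j)\langle\tilde\varphi_{j,x},\varphi_j\rangle=\langle\tilde\varphi_{j,x},E_x\varphi_j\rangle$ is a slightly more compact packaging of the same cancellation the paper obtains by expanding \eqref{l3-0} and noting that the second and third terms in \eqref{l3-5} annihilate each other via $\int C(t,t')\varphi_j(t)\,dt=\lambda_j\varphi_j(t')$.
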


\begin{proof}
A direct calculation using the definitions of $\tilde{\lambda}_j(x)$, $\tilde{\varphi}_{j,x}$, $\lambda_j$, and $\varphi_j$ shows that, for $t\in[0,1]$,
\begin{align}\label{l3-0}
\int \bigg\{\frac{\lfloor nx \rfloor}{n}C(t,t^\prime) & + \bigg(\tilde{C}_x(t,t^\prime) - \frac{\lfloor nx \rfloor}{n}C(t,t^\prime) \bigg) \bigg\}\{\varphi_j(t^\prime)+ (\tilde{\varphi}_{j,x}(t^\prime)-\varphi_j(t^\prime))\}dt^\prime \\
&= \bigg\{\frac{\lfloor nx \rfloor}{n}\lambda_j  + \bigg(\tilde{\lambda}_j(x)-\frac{\lfloor nx \rfloor}{n}\lambda_j\bigg)\bigg\}\{\varphi_j(t)+ (\tilde{\varphi}_{j,x}(t)-\varphi_j(t))\}. \notag
\end{align}
Therefore,
\begin{align}\label{l3-1}
\bigg(\tilde{\lambda}_j(x)-\frac{\lfloor nx \rfloor}{n}\lambda_j  \bigg)\varphi_j(t)
&= \int \bigg(\tilde{C}_x(t,t^\prime) - \frac{\lfloor nx \rfloor}{n}C(t,t^\prime) \bigg) \varphi_j(t^\prime)dt^\prime \\
&+ \int \frac{\lfloor nx \rfloor}{n} C(t,t^\prime) (\tilde{\varphi}_{j,x}(t^\prime)-\varphi_j(t^\prime))dt^\prime - \frac{\lfloor nx \rfloor}{n}\lambda_j (\tilde{\varphi}_{j,x}(t)-\varphi_j(t)) + G_{j,n}(t,x),\notag
\end{align}
where
\begin{align*}
G_{j,n}(t,x) = \int \bigg(\tilde{C}_x(t,t^\prime) - &\frac{\lfloor nx \rfloor}{n}C(t,t^\prime) \bigg) (\tilde{\varphi}_{j,x}(t^\prime)-\varphi_j(t^\prime)) dt^\prime \\
&- \bigg(\tilde{\lambda}_j(x)-\frac{\lfloor nx \rfloor}{n}\lambda_j\bigg)(\tilde{\varphi}_{i,x}(t)-\varphi_i(t)). \notag
\end{align*}
Let
$R_{j,n}(x) = \langle \varphi_j(\cdot),G_{j,n}(\cdot,x)\rangle$ 
It follows from the triangle and Cauchy--Schwarz inequalities that
\begin{align}\label{l3-3}
|R_{j,n}(x)| \le \bigg\| \tilde{C}_x - \frac{\lfloor nx \rfloor}{n}C \bigg\| \|\tilde{\varphi}_{j,x}-\varphi_j\| + |\tilde{\lambda}_j(x)-\frac{\lfloor nx \rfloor}{n}\lambda_j|\|\tilde{\varphi}_{j,x}-\varphi_j\|.
\end{align}
By Assumption \ref{d-lam}, there exists a constant $k_1>0$ such that $ \min_{1\le j \le d} x(\lambda_j - \lambda_{j+1}) \ge k_1$ for all $x\in[\delta,1]$. It follows from Lemmas 2.2 and 2.3 in Horv\'ath and Kokoszka (2012) in combination with the choice of the sign of $\tilde{\varphi}_{j,x}$ that
\begin{align}\label{l3-3.5}
\|\tilde{\varphi}_{j,x}-\varphi_j\| \le \frac{1}{k_1} \bigg\| \tilde{C}_x - \frac{\lfloor nx \rfloor}{n}C \bigg\|
\qquad\mbox{and}\qquad
\bigg|\tilde{\lambda}_j(x)-\frac{\lfloor nx \rfloor}{n}\lambda_j\bigg| \le \bigg\| \tilde{C}_x - \frac{\lfloor nx \rfloor}{n}C \bigg\|.
\end{align}
Taken together with \eqref{l3-3} and Lemma \ref{tc-rn} this gives
\begin{align}\label{l3-4}
\sup_{\delta \le x \le 1} |R_{j,n}(x)| \le k_2 \bigg\| \tilde{C}_x - \frac{\lfloor nx \rfloor}{n}C \bigg\|^2
= O_P\left(\frac{1}{n}\right).
\end{align}
Returning to equation \eqref{l3-1}, taking the inner product of the left- and right-hand sides with $\varphi_j(t)$ implies
\begin{align}\label{l3-5}
\tilde{\lambda}_j(x)-\frac{\lfloor nx \rfloor}{n}\lambda_j
=& \intt \bigg(\tilde{C}_x(t,t^\prime) - \frac{\lfloor nx \rfloor}{n}C(t,t^\prime) \bigg) \varphi_j(t^\prime)\varphi_j(t)dtdt^\prime \\
&+ \intt \frac{\lfloor nx \rfloor}{n} C(t,t^\prime)\varphi_j(t) (\tilde{\varphi}_{j,x}(t^\prime)-\varphi_j(t^\prime))dtdt^\prime \notag \\
&- \int \frac{\lfloor nx \rfloor}{n}\lambda_j (\tilde{\varphi}_{j,x}(t)-\varphi_j(t))\varphi_j(t)dt + R_{j,n}(x).\notag
\end{align}
Noticing that $\int C(t,t^\prime) \varphi_j(t)dt = \lambda_j \varphi_j(s)$, the second and third terms on the right-hand side of \eqref{l3-5} negate each other, and the lemma follows from \eqref{l3-4}.
\end{proof}

\begin{lemma}\label{theta-sum}
If $\theta_{i,j}^{(m)}=\langle\rho_i^{(m)},\varphi_j\otimes\varphi_j\rangle$, 
then for $q=p/2$,
$$
\sum_{m=1}^\infty \big(\mathbb{E}\big[|\theta_{0,j} - \theta_{0,j}^{(m)}|^q\big]\big)^{1/q} < \infty.
$$
\end{lemma}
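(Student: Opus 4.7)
The plan is to reduce the claim directly to the bound \eqref{rho-s} that was already established in the proof of Lemma \ref{tc-rn}. Recall that under the reduction $\mu=0$, we have $X_i=\varepsilon_i$, so that $\rho_i=\varepsilon_i\otimes\varepsilon_i-\mathbb{E}[\varepsilon_0\otimes\varepsilon_0]$ and $\theta_{i,j}=\langle\rho_i,\varphi_j\otimes\varphi_j\rangle$. Noting further that stationarity of the coupled sequence gives $\mathbb{E}[X_{0,m}\otimes X_{0,m}]=\mathbb{E}[X_0\otimes X_0]$, the centering constant drops out of the difference and one gets the clean identity
\begin{equation*}
\theta_{0,j}-\theta_{0,j}^{(m)}=\langle\rho_0-\rho_0^{(m)},\varphi_j\otimes\varphi_j\rangle=\langle X_0\otimes X_0-X_{0,m}\otimes X_{0,m},\varphi_j\otimes\varphi_j\rangle.
\end{equation*}

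The next step is an application of the Cauchy--Schwarz inequality in $L^2([0,1]^2)$. Since $\|\varphi_j\otimes\varphi_j\|=\|\varphi_j\|^2=1$, this yields the pointwise (in $\omega$) bound
\begin{equation*}
|\theta_{0,j}-\theta_{0,j}^{(m)}|\le\|\rho_0-\rho_0^{(m)}\|\,\|\varphi_j\otimes\varphi_j\|=\|\rho_0-\rho_0^{(m)}\|.
\end{equation*}
Taking the $L^q$-norm of both sides with $q=p/2$ and invoking monotonicity of $(\mathbb{E}[\,\cdot\,^q])^{1/q}$ delivers
\begin{equation*}
\big(\mathbb{E}\big[|\theta_{0,j}-\theta_{0,j}^{(m)}|^q\big]\big)^{1/q}\le\big(\mathbb{E}[\|\rho_0-\rho_0^{(m)}\|^q]\big)^{1/q}.
\end{equation*}

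Summing over $m\ge1$ and applying \eqref{rho-s} (which was obtained from Minkowski, Cauchy--Schwarz and Assumption~\ref{edep}(b) in the proof of Lemma~\ref{tc-rn}) finishes the argument, as the upper bound is already known to be finite. There is no serious obstacle here; the only thing to be careful about is to observe that the deterministic centering term $\mathbb{E}[X_0\otimes X_0]$ is the same in the definitions of $\rho_0$ and $\rho_0^{(m)}$, so that it cancels in the difference and one is left with a genuine coupling between $X_0\otimes X_0$ and $X_{0,m}\otimes X_{0,m}$ whose $L^q$-size is controlled by \eqref{rho-s}.
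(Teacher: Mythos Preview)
Your proof is correct and follows essentially the same approach as the paper: apply Cauchy--Schwarz in $L^2([0,1]^2)$ to bound $|\theta_{0,j}-\theta_{0,j}^{(m)}|$ by $\|\rho_0-\rho_0^{(m)}\|$, then sum over $m$ and invoke \eqref{rho-s}. The extra remarks you make (cancellation of the centering term, $\|\varphi_j\otimes\varphi_j\|=1$) are helpful clarifications but do not change the argument.
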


\begin{proof}
By the Cauchy--Schwarz inequality,
\begin{align*}
|\theta_{0,j} - \theta_{0,j}^{(m)}|
&= \big|\langle\rho_i-\rho_i^{(m)},\varphi_j\otimes\varphi_j\rangle\big| 
\le \|\rho_0 - \rho_0^{(m)} \|.
\end{align*}
This implies with \eqref{rho-s} that
$$
\sum_{m=1}^\infty \big(\mathbb{E}\big[|\theta_{0,j} - \theta_{0,j}^{(m)}|^q\big]\big)^{1/q}
\le \sum_{m=1}^\infty \big(\mathbb{E}\big[\|\rho_0 - \rho_0^{(m)} \|^q\big]\big)^{1/q} < \infty.
$$

\end{proof}

\medskip
%
%

The next aim is to establish Theorem \eqref{thm-joint}. Let $\|\cdot\|_d$ denote standard Euclidean norm in $\mathbb{R}^d$, that is, for $y \in \mathbb{R}^d$,
$
\|y\|_d = ( \sum_{j=1}^{d} y_j^2 )^{1/2}.
$
Let also
$
\tilde{{\bf \Lambda}}_d(x) = ( \tilde{\lambda}_1(x),\ldots,\tilde{\lambda}_d(x))^\top \in D^d[\delta,1].
$

\begin{lemma}
\label{lem-1-thm-2}
Under the conditions of Theorem \ref{thm-joint},
$$
\sup_{\delta \le x \le 1} \| \hat{{\bf \Lambda}}_d(x) -\tilde{{\bf \Lambda}}_d(x) \|_d = O_P\left( \frac{1}{n}\right).
$$
\end{lemma}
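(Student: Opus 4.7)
The plan is to reduce the vector statement to the componentwise statement already proved in Lemma \ref{lam-mr}. Since $d$ is a fixed integer (by Assumption \ref{d-lam}), norms on $\mathbb{R}^d$ are equivalent, and in particular
$$
\| \hat{{\bf \Lambda}}_d(x) -\tilde{{\bf \Lambda}}_d(x) \|_d
= \Bigg(\sum_{j=1}^d \big(\hat{\lambda}_j(x)-\tilde{\lambda}_j(x)\big)^2\Bigg)^{1/2}
\le \sqrt{d}\,\max_{1\le j\le d}\big|\hat{\lambda}_j(x)-\tilde{\lambda}_j(x)\big|.
$$
Taking suprema over $x\in[\delta,1]$ on both sides and then the maximum over $j$ gives
$$
\sup_{\delta\le x\le 1}\| \hat{{\bf \Lambda}}_d(x) -\tilde{{\bf \Lambda}}_d(x) \|_d
\le \sqrt{d}\,\max_{1\le j\le d}\;\sup_{\delta\le x\le 1}\big|\hat{\lambda}_j(x)-\tilde{\lambda}_j(x)\big|.
$$

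Now each of the $d$ terms on the right is $O_P(1/n)$ by Lemma \ref{lam-mr} (which was itself an immediate consequence of the operator-norm bound \eqref{gohberg} together with Lemma \ref{mean-rem}). Since $d$ is finite and deterministic, a maximum of finitely many $O_P(1/n)$ quantities is again $O_P(1/n)$; concretely, for any $\eta>0$ we can choose $K$ and $n_0$ so that, for $n\ge n_0$, $\mathbb{P}(n\sup_x|\hat{\lambda}_j(x)-\tilde{\lambda}_j(x)|>K)<\eta/d$ for every $j=1,\ldots,d$, and a union bound yields the conclusion.

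There is essentially no obstacle here. The only thing worth noting is that the dimension $d$ plays the role of a constant, so the constant in the $O_P$ statement may depend on $d$ (through the $\sqrt{d}$ factor), but this is harmless because $d$ is fixed in Assumption \ref{d-lam}. Thus the lemma follows at once.
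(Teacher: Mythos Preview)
Your proof is correct and follows essentially the same route as the paper: bound each eigenvalue difference by the operator/kernel norm difference via \eqref{gohberg}, then invoke Lemma \ref{mean-rem}. The only cosmetic difference is that the paper bounds the full Euclidean norm $\big(\sum_{j=1}^d|\hat\lambda_j-\tilde\lambda_j|^2\big)^{1/2}$ directly by $\|\hat C_x-\tilde C_x\|$ and applies Lemma \ref{mean-rem} in one step, whereas you pass through Lemma \ref{lam-mr} componentwise and add an explicit $\sqrt{d}$ and union-bound argument; both are equivalent since $d$ is fixed.
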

\begin{proof}
Using \eqref{gohberg}, it follows that
\begin{align*}
\sup_{\delta \le x \le 1} \| \hat{{\bf \Lambda}}_d(x) -\tilde{{\bf \Lambda}}_d(x) \|_d
= \sup_{\delta \le x \le 1} \bigg( \sum_{j=1}^{d} |\hat{\lambda}_j - \tilde{\lambda}_j|^2 \bigg)^{1/2}
\le \sup_{\delta \le x \le 1} \|\hat{C}_x - \tilde{C}_x\|,
\end{align*}
and so the result follows from Lemma \ref{mean-rem}.
\end{proof}

Let
$
{\bf R}_{d,n}(x) = (R_{1,n}(x),\ldots,R_{d,n}(x))^\top \in \mathbb{R}^d,
$
where $R_{j,n}(x)$ is defined in Lemma \ref{main-approx}.

\begin{lemma}\label{lem-2-thm-2}
Under the conditions of Theorem \eqref{thm-joint},
$$
\tilde{{\bf \Lambda}}_d(x) - \frac{\lfloor n x \rfloor}{n} {\bf \Lambda}_d = \frac{1}{n} \sum_{i=1}^{\lfloor nx \rfloor } {\bf \Theta}_i+ {\bf R}_{d,n}(x),
$$
where ${\bf \Theta}_i$ are defined after Theorem \ref{thm-joint} and
\begin{align}\label{r-vec-approx}
  \sup_{\delta \le x \le 1} \|{\bf R}_{d,n}(x)\|_d = O_P\left(\frac{1}{n}\right).
\end{align}
\end{lemma}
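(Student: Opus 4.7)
The plan is to simply unpack the scalar identity already derived in Lemma \ref{main-approx} and assemble its components. The display in Lemma \ref{main-approx} is most naturally read as an $L^2([0,1]^2)$ inner product (compare the chain of equations leading to (l3-5), where the author observes that the two middle terms cancel by the eigenequation for $\varphi_j$), namely
\begin{align*}
\tilde{\lambda}_j(x) - \frac{\lfloor nx\rfloor}{n}\,\lambda_j
= \bigg\langle \tilde{C}_x - \frac{\lfloor nx\rfloor}{n}\,C,\ \varphi_j\otimes\varphi_j\bigg\rangle + R_{j,n}(x),
\qquad x\in[\delta,1].
\end{align*}

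The first step is to rewrite the inner product as a partial-sum average of the $\theta_{i,j}$. Since, without loss of generality, $\mu\equiv 0$ so that $X_i=\varepsilon_i$, the definition of $\tilde{C}_x$ yields
\begin{align*}
\tilde{C}_x - \frac{\lfloor nx\rfloor}{n}\,C
= \frac{1}{n}\sum_{i=1}^{\lfloor nx\rfloor}\bigl(\varepsilon_i\otimes\varepsilon_i - \mathbb{E}[\varepsilon_0\otimes\varepsilon_0]\bigr)
\end{align*}
as elements of $L^2([0,1]^2)$. Taking the inner product against $\varphi_j\otimes\varphi_j$ and invoking linearity give $\langle\tilde{C}_x-(\lfloor nx\rfloor/n)C,\varphi_j\otimes\varphi_j\rangle=\frac{1}{n}\sum_{i=1}^{\lfloor nx\rfloor}\theta_{i,j}$, so that
\begin{align*}
\tilde{\lambda}_j(x) - \frac{\lfloor nx\rfloor}{n}\,\lambda_j
= \frac{1}{n}\sum_{i=1}^{\lfloor nx\rfloor}\theta_{i,j} + R_{j,n}(x),
\qquad j=1,\ldots,d.
\end{align*}
Stacking these $d$ identities as the coordinates of a vector in $\mathbb{R}^d$ produces exactly the claimed decomposition of $\tilde{\bf \Lambda}_d(x)-(\lfloor nx\rfloor/n){\bf \Lambda}_d$.

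It remains to verify \eqref{r-vec-approx}. The Euclidean norm is dominated by the $\ell^1$-norm, hence
\begin{align*}
\sup_{\delta\le x\le 1}\|{\bf R}_{d,n}(x)\|_d
\le \sup_{\delta\le x\le 1}\sum_{j=1}^d |R_{j,n}(x)|
\le \sum_{j=1}^d \sup_{\delta\le x\le 1}|R_{j,n}(x)|.
\end{align*}
Lemma \ref{main-approx} gives $\sup_{\delta\le x\le 1}|R_{j,n}(x)|=O_P(1/n)$ for every fixed $j$, and since $d$ is a fixed finite integer (Assumption \ref{d-lam}), summing $d$ terms preserves the rate, yielding $\sup_{\delta\le x\le 1}\|{\bf R}_{d,n}(x)\|_d=O_P(1/n)$.

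There is no serious obstacle here; the lemma is essentially a coordinate-wise restatement of Lemma \ref{main-approx} combined with the observation that inner-producting the empirical centered covariance increment against $\varphi_j\otimes\varphi_j$ produces precisely $\theta_{i,j}$. The only mild subtlety is the reinterpretation of the remainder in Lemma \ref{main-approx} as an inner product (consistent with the computation preceding its (l3-5)), after which everything is linear algebra and a finite union bound on the coordinates.
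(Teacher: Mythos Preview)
Your proposal is correct and follows essentially the same route as the paper: both arguments stack the scalar identity from Lemma~\ref{main-approx} (equivalently, equation~\eqref{l3-5}) coordinate by coordinate and then invoke the uniform $O_P(1/n)$ bound on each $R_{j,n}$ to control the vector remainder. Your use of the $\ell^1$ bound in place of the paper's $\max$ bound is an immaterial difference, and your explicit identification of the first term as $\langle \tilde C_x-(\lfloor nx\rfloor/n)C,\varphi_j\otimes\varphi_j\rangle=n^{-1}\sum_{i\le\lfloor nx\rfloor}\theta_{i,j}$ correctly resolves what is evidently a typographical slip in the displayed statement of Lemma~\ref{main-approx}.
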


\begin{proof}
Considering \eqref{l3-5} for $j=1,\ldots,d$ implies
$$
\tilde{{\bf \Lambda}}_d(x) - \frac{\lfloor n x \rfloor}{n} {\bf \Lambda}_d = \frac{1}{n} \sum_{i=1}^{\lfloor nx \rfloor } {\bf \Theta}_i + {\bf R}_{d,n}(x).
$$
Furthermore, it follows from \eqref{l3-4} that
$$
\max_{1\le j \le d} \sup_{\delta \le x \le 1} |R_{j,n}(x)| = O_P\left( \frac{1}{n} \right),
$$
from which the result follows.
\end{proof}

%
%
%

\begin{proof}[Proof of Theorem \ref{thm-joint}] Let
$
{\bf \Theta}_i^{(m)} = (\theta_{i,1}^{(m)},\ldots,\theta_{i,d}^{(m)})^\top,
$
where $\theta_{i,j}^{(m)}$ is defined in Lemma \ref{theta-sum}. Lyapounov's inequality yields
\begin{align*}
\big(\mathbb{E}\big[\|{\bf \Theta}_0 - {\bf \Theta}_0^{(m)}\|_d^2\big]\big)^{1/2}
&= \bigg( \sum_{j=1}^{d} \mathbb{E}\big[|\theta_{0,j} - \theta_{0,j}^{(m)}|^2\big] \bigg)^{1/2} \\
& \le d \max_{1 \le j \le d} \big(\mathbb{E}\big[|\theta_{0,j} - \theta_{0,j}^{(m)}|^2\big] \big)^{1/2} \\
& \le d \max_{1 \le j \le d} \big(\mathbb{E}\big[|\theta_{0,j} - \theta_{0,j}^{(m)}|^q\big] \big)^{1/q}.
\end{align*}
Therefore, by Lemma \ref{theta-sum},
\begin{align}\label{theta-approx-vec}
\sum_{m=1}^{\infty} \big(\mathbb{E}\big[\|{\bf \Theta}_0 - {\bf \Theta}_0^{(m)}\|_d^2\big] \big)^{1/2} < \infty.
\end{align}
Now the assertion follows from Lemmas \ref{lem-1-thm-2} and \ref{lem-2-thm-2}, and Theorem A.1 of Aue et al. (2009), noting that the sum defining the matrix
$$
\Sigma_d(j,j^\prime)= \sum_{i=-\infty}^\infty \mathrm{Cov}(\theta_{0,j},\theta_{i,j}),
\qquad j,j^\prime=1,\ldots, d,
$$
converges pointwise absolutely.
\end{proof}

The proof of Theorem \ref{thm-trace} is similar to the above results. A sketch of the basic idea is given below.

\begin{proof}[Sketch of the proof of Theorem \ref{thm-trace}] Let $
\tilde{T}_n(x) 
= \frac{1}{n} \sum_{i=1}^{\lfloor nx \rfloor } \tilde{\xi}_i$, where $\xi_i=\|X_i-\mu\|^2$.
It follows then along the lines of the proof of Lemma \ref{lam-mr} that
$$
\sup_{0 \le x \le 1} |\hat{T}_n(x) - \tilde{T}_n(x) | = o_P(1).
$$
With $\tilde{\xi}_i^{(m)}=\|X_i^{(m)}-\mu\|^2$, 
it follows as in Lemma \ref{theta-sum} that
$$
\sum_{m=1}^{\infty} \big(\mathbb{E}\big[(\xi_0 - \xi_0^{(m)})^2\big]\big)^{1/2} < \infty.
$$
Now the theorem follows from Theorem 3 of Wu (2005).
\end{proof}

\section{Estimation of $\Sigma_d$ and $\sigma_T^2$, and proof of Theorem \ref{thm-cons}} \label{var-est}


Let $\hat{\theta}_{i,j}$ be defined in \eqref{theta-def-est}, and let
$
\hat{{\bf \Theta}}_i = (\hat{\theta}_{i,1},\ldots,\hat{\theta}_{j,d})^\top.
$
The estimator for the long-run covariance matrix $\Sigma_d$ is given by
$$
\hat{\Sigma}_d = \sum_{\ell= -\infty}^{\infty} w\left(\frac{\ell}{h}\right) \hat{{\bf \Gamma}}_{\ell,\theta},
$$
where $h$ is a bandwidth parameter satisfying $h=h(n)$, and $1/h(n) + h(n)/n^{1/2} \to 0 $ as $n\to \infty$, and
\begin{align*}
 \hat{{\bf \Gamma}}_{\ell,\theta}=
\frac{1}{n}\sum_{i\in\mathcal{I}_\ell}\big[\hat{{\bf \Theta}}_i-\bar{{\bf \Theta}}\big]\big[\hat{{\bf \Theta}}_{i+\ell}-\bar{{\bf \Theta}}\big]^\top,
\end{align*}
with $\mathcal{I}_\ell=\{1,\ldots,n-\ell\}$ if $\ell\geq 0$ and $\mathcal{I}_\ell=\{1-\ell,\ldots,n\}$ if $\ell<0$, as well as
$   \bar{{\bf \Theta}}= \frac{1}{n} \sum_{i=1}^{n}\hat{{\bf \Theta}}_i$,
whereas $w$ is a symmetric weight function with bounded support satisfying the standard conditions $w(0)=1$, $w(u)=w(-u)$, $w(u)\le 1$, $w(u)=0$ if $|u|>m$ for some $m>0$, $w$ is continuous, and for some $b>0$
\begin{align}\label{order}
0<  \mathfrak{q} =\lim_{x\to 0} x^{-b}[1-w(x)] < \infty.
\end{align}

\begin{theorem}\label{sig-cons-thm}
Under the conditions of Theorem \ref{thm-joint}, $\hat{\Sigma}_d$ satisfies \eqref{sig-cons}.
\end{theorem}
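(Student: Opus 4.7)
The plan is to follow the standard infeasible/feasible decomposition used for kernel long-run covariance estimators. Define the oracle (infeasible) estimator based on the true scores
\[
\tilde{\Sigma}_d = \sum_{\ell=-\infty}^{\infty} w\!\left(\frac{\ell}{h}\right)\tilde{\bm{\Gamma}}_{\ell,\theta},
\qquad
\tilde{\bm{\Gamma}}_{\ell,\theta}=\frac{1}{n}\sum_{i\in\mathcal{I}_\ell}({\bm{\Theta}}_i-\bar{\bm{\Theta}}_0)({\bm{\Theta}}_{i+\ell}-\bar{\bm{\Theta}}_0)^\top,
\]
with $\bar{\bm{\Theta}}_0=n^{-1}\sum_{i=1}^n \bm{\Theta}_i$, and split
\[
|\hat{\Sigma}_d-\Sigma_d|_F \le |\hat{\Sigma}_d-\tilde{\Sigma}_d|_F + |\tilde{\Sigma}_d-\Sigma_d|_F.
\]
First I would show that $|\tilde{\Sigma}_d-\Sigma_d|_F=o_P(1)$. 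By Lemma \ref{theta-sum} and the Cauchy--Schwarz bound used in \eqref{theta-approx-vec}, the $\mathbb{R}^d$-valued sequence $(\bm{\Theta}_i)$ is $L^q$-$m$-approximable with $q=p/2>2$, so $\sum_{\ell}|\mathrm{Cov}(\Theta_{0,j},\Theta_{\ell,j'})|<\infty$ componentwise. Standard bias/variance analysis of kernel lag-window estimators for weakly dependent vector sequences (as in Brockwell and Davis 2006 for short-range dependent processes, with the $L^q$-$m$-approximation machinery replacing $\alpha$-mixing) then yields $|\tilde{\Sigma}_d-\Sigma_d|_F=o_P(1)$ under $w(0)=1$, the continuity/support/order condition \eqref{order}, and the bandwidth rate $1/h+h/\sqrt{n}\to 0$.

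The main work is the feasible-to-infeasible comparison $|\hat{\Sigma}_d-\tilde{\Sigma}_d|_F=o_P(1)$. Writing $\varepsilon_i=X_i-\mu$, one has the algebraic identity
\[
(X_i-\bar X)\otimes(X_i-\bar X)-\hat C_1 = \varepsilon_i\otimes\varepsilon_i - \tfrac{1}{n}\sum_{k=1}^n \varepsilon_k\otimes\varepsilon_k,
\]
so that, setting $\bar{\theta}_{\cdot,j}=n^{-1}\sum_k \theta_{k,j}$,
\[
\hat{\theta}_{i,j}-(\theta_{i,j}-\bar{\theta}_{\cdot,j})
= \bigl\langle (X_i-\bar X)\otimes(X_i-\bar X)-\hat C_1,\;\hat\varphi_{j,1}\otimes\hat\varphi_{j,1}-\varphi_j\otimes\varphi_j\bigr\rangle.
\]
Since $\|\hat\varphi_{j,1}-\varphi_j\|=O_P(n^{-1/2})$ by Lemmas \ref{tc-rn}--\ref{main-approx} applied at $x=1$ (together with \eqref{l3-3.5}), and since $n^{-1}\sum_{i=1}^n\|(X_i-\bar X)\otimes(X_i-\bar X)-\hat C_1\|^2=O_P(1)$ by Assumption \ref{edep}(b) with $p>4$, one gets $\max_{j\le d}n^{-1}\sum_i(\hat\theta_{i,j}-\theta_{i,j}+\bar\theta_{\cdot,j})^2 = O_P(n^{-1})$. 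The centering difference $\bar{\bm{\Theta}}-\bar{\bm{\Theta}}_0$ is handled analogously and is uniformly $O_P(n^{-1/2})$.

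With these ingredients in hand, the final step is to expand $\hat{\bm{\Gamma}}_{\ell,\theta}-\tilde{\bm{\Gamma}}_{\ell,\theta}$ into cross-products between $\hat{\bm{\Theta}}_i-\bm{\Theta}_i$ and $\bm{\Theta}_i$, bound each by Cauchy--Schwarz in the $i$-index, and use $|w(\ell/h)|\le 1$ together with $w$ having support in $[-mh,mh]$ so that only $O(h)$ lags contribute. Combining the bounds yields
\[
|\hat{\Sigma}_d-\tilde{\Sigma}_d|_F = O_P\!\left(\frac{h}{\sqrt{n}}\right)=o_P(1)
\]
under the bandwidth assumption $h/\sqrt{n}\to 0$. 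I expect the hardest part to be step two: quantifying uniformly in the lag $\ell$ the effect of replacing the unobservable $(\varphi_j,\mu,C)$ by their estimators inside $\hat{\bm{\Gamma}}_{\ell,\theta}$, because the naive bound $O(h/\sqrt{n})$ only closes because of the $h=o(\sqrt{n})$ side of the bandwidth condition, and one must verify that the $p>4$ moment assumption suffices to control the second-moment cross-terms that appear after the eigenfunction perturbation expansion. Once this is done, the triangle inequality gives \eqref{sig-cons}.
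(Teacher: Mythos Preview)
Your plan is correct and matches the paper's proof almost exactly: the paper also splits $|\hat{\Sigma}_d-\Sigma_d|_F$ through an oracle estimator $\tilde{\Sigma}_d$ built from the true $\bm{\Theta}_i$, handles $|\tilde{\Sigma}_d-\Sigma_d|_F=o_P(1)$ by appeal to Chapter~11 of Brockwell and Davis (2006), and controls $|\hat{\Sigma}_d-\tilde{\Sigma}_d|_F$ via the perturbation bound $\|\hat\varphi_{j,1}-\varphi_j\|\le k\|\hat C-C\|=O_P(n^{-1/2})$, Cauchy--Schwarz in the $i$-index, and the $O(h)$ effective number of lags to obtain $O_P(h/\sqrt{n})$. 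One small caveat: your displayed ``algebraic identity'' is not exact---the left-hand side carries the extra terms $-\varepsilon_i\otimes\bar\varepsilon-\bar\varepsilon\otimes\varepsilon_i+2\bar\varepsilon\otimes\bar\varepsilon$---but these are of order $\|\varepsilon_i\|\,O_P(n^{-1/2})$ and therefore do not disturb your $n^{-1}\sum_i(\cdot)^2=O_P(n^{-1})$ bound; the paper sidesteps this by inserting an intermediate estimator $\hat{\Sigma}_d'$ that removes the sample-mean centering first.
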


\begin{proof}
Let
$$
\hat{\Sigma}'_d = \sum_{\ell= -\infty}^{\infty} w\left(\frac{\ell}{h}\right) \hat{{\bf \Gamma}}'_{\ell,\theta},
\qquad\mbox{and}\qquad
\tilde{\Sigma}_d = \sum_{\ell= -\infty}^{\infty} w\left(\frac{\ell}{h}\right) \tilde{{\bf \Gamma}}_{\ell,\theta},
$$
where
$$
\hat{{\bf \Gamma}}'_{\ell,\theta}=
\frac{1}{n}\sum_{i\in\mathcal{I}_\ell}\big[\hat{{\bf \Theta}}'_i-\bar{{\bf \Theta}}'\big]\big[\hat{{\bf \Theta}}'_{i+\ell}-\bar{{\bf \Theta}}'\big]^\top
\qquad\mbox{and}\qquad
\tilde{{\bf \Gamma}}_{\ell,\theta}=
\frac{1}{n}\sum_{i\in\mathcal{I}_\ell}\big[{\bf \Theta}_i-\bar{{\bf \Theta}}^*\big]\big[{\bf \Theta}_{i+\ell}-\bar{{\bf \Theta}}^*\big]^\top,
$$
with $\hat{{\bf \Theta}}'_j=(\hat{\theta}_{1,j}',\ldots,\hat{\theta}_{d,j}')^\top$,
$
\hat{\theta}_{i,j}' =\langle X_i\otimes X_i-\mathbb{E}[X_0\otimes X_0],\hat\varphi_{j,1}\otimes\hat\varphi_{j,1}\rangle
$, 
$
\bar{{\bf \Theta}}' = \frac{1}{n}\sum_{i=1}^{n}\hat{{\bf \Theta}}'_i$ and $\bar{{\bf \Theta}}^* = \frac{1}{n}\sum_{i=1}^{n}{\bf \Theta}_i$.
Elementary arguments show that
\begin{align}\label{sig-cons-0}
|\hat{\Sigma}'_d - \hat{\Sigma}_d |_F = o_P(1),
\end{align}
and so details are omitted. The first aim is then to show that
\begin{align}\label{sig-cons-1}
  |\tilde{\Sigma}_d - \hat{\Sigma}_d |_F = o_P(1).
\end{align}
To this end, note that
\begin{align*}
\hat{{\bf \Gamma}}_{\ell,\theta} - \tilde{{\bf \Gamma}}_{\ell,\theta}
&= \frac{1}{n} \sum_{i \in \mathcal{I}_\ell } \Big\{ \big[\hat{{\bf \Theta}}_i-\bar{{\bf \Theta}}\big]\big[\hat{{\bf \Theta}}_{i+\ell}-\bar{{\bf \Theta}}\big]^\top  - \big[{\bf \Theta}_i-\bar{{\bf \Theta}}^*\big]\big[{\bf \Theta}_{i+\ell}-\bar{{\bf \Theta}}^*\big]^\top \\
   &\qquad\qquad+ \big[{{\bf \Theta}}_i-\bar{{\bf \Theta}}^*\big]\big[\hat{{\bf \Theta}}_{i+\ell}-\bar{{\bf \Theta}}\big]^\top -\big[{{\bf \Theta}}_i-\bar{{\bf \Theta}}^*\big]\big[\hat{{\bf \Theta}}_{i+\ell}-\bar{{\bf \Theta}}\big]^\top  \Big\}  \\
   &= B_1 + B_2,
\end{align*}
where
\begin{align*}
B_1 &= \frac{1}{n} \sum_{i \in \mathcal{I}_\ell } \big[\hat{{\bf \Theta}}_i-\bar{{\bf \Theta}} - {\bf \Theta}_i+\bar{{\bf \Theta}}^*\big]\big[\hat{{\bf \Theta}}_{i+\ell}-\bar{{\bf \Theta}}\big]^\top, \\
B_2 &= \frac{1}{n} \sum_{i \in \mathcal{I}_\ell } \big[{{\bf \Theta}}_i-\bar{{\bf \Theta}}^*\big]\big[\hat{{\bf \Theta}}_{i+\ell}-\bar{{\bf \Theta}} - {\bf \Theta}_{i+\ell}+\bar{{\bf \Theta}}^*\big]^\top.
\end{align*}
One then has that
\begin{align*}
\Big|\big[\hat{{\bf \Theta}}_i-\bar{{\bf \Theta}} - {\bf \Theta}_i+\bar{{\bf \Theta}}^*\big]
\big[\hat{{\bf \Theta}}_{i+\ell}-\bar{{\bf \Theta}}\big]^\top \Big|_F
\le k_3 \big\|\hat{{\bf \Theta}}_j-\bar{{\bf \Theta}} - {\bf \Theta}_j+\bar{{\bf \Theta}}^*\big\|_d
\big\| \hat{{\bf \Theta}}_{j+\ell}-\bar{{\bf \Theta}} \big\|_d.
\end{align*}
By the triangle inequality,  $ \|\hat{{\bf \Theta}}_j-\bar{{\bf \Theta}} - {\bf \Theta}_j+\bar{{\bf \Theta}}^*\|_d \le \|\hat{{\bf \Theta}}_j - {\bf \Theta}_j\|_d +\|\bar{{\bf \Theta}}-\bar{{\bf \Theta}}^*\|_d$. The Cauchy--Schwarz inequality along with \eqref{l3-3.5} imply that
$$
\|\hat{{\bf \Theta}}_i - {\bf \Theta}_i\|_d  \le k_4 \max_{1\le j \le d} \| \varphi_j - \hat{\varphi}_{j,1}\|  \le k_4 \|\hat{C} - C\|.
$$
Similarly,
$
\|\hat{{\bf \Theta}}_j - {\bf \Theta}_j\|_d  \le k_5  \|\hat{C} - C\|.
$
Several applications of the Cauchy--Schwarz inequality yield
$$
\mathbb{E}\big[ \| \hat{{\bf \Theta}}_{i+\ell}-\bar{{\bf \Theta}} \|_d^2\big] \le k_6 E[\|X_0\|^4].
$$
Combining the above with Lemma \ref{tc-rn}, it follows that
\begin{align*}
\mathbb{E}[|B_1|_F]
&\le  \frac{1}{n} \sum_{i \in \mathcal{I}_\ell }
\mathbb{E}\Big| \big[\hat{{\bf \Theta}}_i-\bar{{\bf \Theta}} - {\bf \Theta}_i+\bar{{\bf \Theta}}^*\big]\big[\hat{{\bf \Theta}}_{i+\ell}-\bar{{\bf \Theta}}\big]^\top \Big|  \\
&\le \frac{1}{n} \sum_{i \in \mathcal{I}_\ell } k_7 \mathbb{E}\big[ \|\hat{C} - C \| \|\hat{{\bf \Theta}}_{i+\ell}-\bar{{\bf \Theta}} \|_d\big] \\
&\le \frac{1}{n} \sum_{i \in \mathcal{I}_\ell }  k_7  \big(\mathbb{E}[ \|\hat{C} - C \|^2]\big)^{1/2}  \big(\mathbb{E}[\hat{{\bf \Theta}}_{i+\ell}-\bar{{\bf \Theta}} \|_d^2] \big) \\
&\le k_8 \big(\mathbb{E}[ \|\hat{C} - C\|^2]\big)^{1/2} \\
&= O\bigg(\frac{1}{\sqrt{n}}\bigg).
\end{align*}
Hence, by Markov's inequality, $|B_1|_F= O_P(1/\sqrt{n})$. Similarly, $|B_2|_F= O_P(1/\sqrt{n})$. Now it follows by the assumptions that $w$ is bounded with bounded support and $h/\sqrt{n} \to 0$ as $n\to \infty$ that
$$
|\tilde{\Sigma}_d - \hat{\Sigma}_d |_F
\le \sum_{\ell=-\infty}^{\infty} w\left(\frac{\ell}{h} \right)| \hat{{\bf \Gamma}}_{\ell,\theta} - \tilde{{\bf \Gamma}}_{\ell,\theta} |_F
=O_P\bigg(\frac{h}{\sqrt{n}}\bigg) = o_P(1),
$$
as desired. The same arguments presented in Chapter 11 of Brockwell and Davis (2006) lead to
\begin{align*}
  |\tilde{\Sigma}_d - \Sigma_d |_F = o_P(1),
\end{align*}
from which the result follows in light of \eqref{sig-cons-0}, \eqref{sig-cons-1}, and the triangle inequality for the Frobenius norm.
\end{proof}

In order to estimate $\sigma_T^2$ in \eqref{sig-trace-def}, use the estimator
$$
\hat{\sigma}_T^2 = \sum_{\ell=-\infty}^{\infty} w \left( \frac{\ell}{h} \right) \hat{\gamma}_\ell,
$$
where
$$
\hat{\gamma}_\ell = \frac{1}{n} \sum_{i\in I_\ell} (\hat{\xi}_i - \bar{\xi})(\hat{\xi}_{i+\ell} - \bar{\xi}),
$$
where $I_\ell = \{1,\ldots,n-\ell\}$ if $\ell \ge 1$, $I_\ell = \{1-\ell,\ldots,n\}$ if $\ell \le 0$, and
$\bar{\xi} = \frac{1}{n} \sum_{i=1}^{n}\hat{\xi}_i$.
It follows as in the proof of Theorem \ref{sig-cons-thm} that under the conditions of Theorem \ref{thm-trace}, $|\sigma_T^2 - \hat{\sigma}_T^2|=o_P(1)$.

\begin{proof}[Proof of Theorem \ref{thm-cons}:]
Begin with establishing part (a). Since $\delta \le \tau$,
\begin{align}\label{j-approx}
J_n(\delta) \ge {\bf \kappa}_n(\tau)^\top \hat{\Sigma}_d^{-1} {\bf \kappa}_n(\tau).
\end{align}
Under Assumption \ref{alt-as}, it follows from Theorem \ref{thm-joint} that there exists a $d$-dimensional vector-valued process $({\bf R}_{d,n}'(\tau)\colon\tau\in[0,1])$ such that
\begin{align}\label{con-thm-1}
\hat{{\bf \Lambda}}_d(\tau) = \tau {\bf \Lambda}_d^{(1)} + {\bf R}_{1,d,n}(\tau),
\end{align}
with ${\bf R}_{d,n}'(\tau)$ satisfying
\begin{align}\label{con-thm-2}
\max_{1\le j \le d} \sup_{\delta \le x \le 1} |{\bf R}_{1,d,n}'(\tau,j)|
= O_P\bigg( \frac{1}{\sqrt{n}} \bigg),
\end{align}
where ${\bf R}_{d,n}'(\tau,j)$ denotes the $j$th coordinate of ${\bf R}_{d,n}'(\tau)$. Moreover, it follows under Assumption \ref{alt-as} that
\begin{align}\label{c-alt-approx}
\hat{C}_1(t,s) = \tau C_1(t,s) + (1-\tau)C_2(t,s) + V_n(t,s),
\end{align}
with $\|V_n\|=O_P(1/\sqrt{n})$. Thus, \eqref{gohberg} implies that
$$
\hat{\lambda}_j(1)
= \tau \lambda_j^{(1)} + (1-\tau) \lambda_j^{(2)} + O_P\bigg(\frac{1}{\sqrt{n}}\bigg).
$$
Hence, \eqref{con-thm-1} and \eqref{con-thm-2} show that
$$
\hat{\lambda}_1(\tau) - \frac{\lfloor n \tau \rfloor }{n} \hat{\lambda}_1(1) = \tau(1-\tau) (\lambda_1^{(1)} - \lambda_1^{(2)})+O_P\bigg(\frac{1}{\sqrt{n}}\bigg),
$$
and so
\begin{align}\label{kap-aprox}
{\bf \kappa}_n(\tau) =\sqrt{n} \tau (1-\tau)  ({\bf \Lambda}_d^{(1)}-  {\bf \Lambda}_d^{(2)})  + {\bf R}_{2,d,n}(\tau), \qquad \|{\bf R}_{2,d,n}(\tau)\|_d = O_P(1).
\end{align}
This provides an approximation for the behavior of ${\bf \kappa}_n(\tau)$ under Assumption \ref{alt-as}. Turning to the asymptotic behavior of $\hat{\Sigma}_d^{-1}$ under $H_A$ and Assumption \ref{alt-as}, note that imposing Assumption \ref{d-lam} on the models $g_1$ and $g_2$ ensures that the eigenvalues of the integral operator with kernel $C^*(t,t^\prime) = \tau C_1(t,t^\prime) + (1-\tau)C_2(t,t^\prime)$ have strictly positive spacings, and hence the same perturbation result used to establish \eqref{l3-3.5} implies
\begin{align}\label{var-phi-alt}
\| \hat{\varphi}_{j,1} - \varphi_j^* \| \le k_9 \| \hat{C}_1 - C^*\| = O_P\bigg(\frac{1}{\sqrt{n}}\bigg).
\end{align}

Let ${\bf \Theta}^{(k)}_i = (\theta_{i,1}^{(k)},\ldots,\theta_{i,d}^{(k)})^\top,$ where $\theta_{i,j}^{(k)}$ is defined in Assumption \ref{alt-as}. Note that $\mathbb{E}[{\bf \Theta}^{(k)}_0] = {\bm 0}$.  Let
$$
{\bf \Gamma}^{(k)}_{\ell,\theta}=
\mathbb{E}\big[{\bf \Theta}^{(k)}_0{\bf \Theta}^{(k)\top}_{\ell}\big]
$$
and ${\bf \tilde{\Theta}}^*_i = (\tilde{\theta}_{i,1}^*,\ldots,\tilde{\theta}_{i,d}^{*})^\top$, where
$$
\tilde{\theta}_{i,j}^*=
\langle X_i\otimes X_i-\tau C_1-(1-\tau)C_2,\varphi_j^*\otimes\varphi_j^*\rangle.
$$
Define
$$
\tilde{{\bf \Gamma}}^*_{\ell,\theta}
=\frac{1}{n}\sum_{i\in\mathcal{I}_\ell}\big[{\bf \tilde{\Theta}}^*_i-\bar{{\bf \Theta}}^*\big]
\big[{\bf \tilde{\Theta}}^*_{i+\ell}-\bar{{\bf \Theta}}^*\big]^\top,
\qquad\mbox{and}\qquad
\tilde{\Sigma}_{\star,d} = \sum_{\ell= -\infty}^{\infty} w\left(\frac{\ell}{h}\right) \tilde{{\bf \Gamma}}_{\ell,\theta}^*.
$$
Using \eqref{var-phi-alt} and \eqref{c-alt-approx}, one may show as in \eqref{sig-cons-0} and \eqref{sig-cons-1} that
\begin{align}\label{sig-til-approx}
|\hat{\Sigma}_d - \tilde{\Sigma}_{\star,d} |_F = O_P\bigg(\frac{h}{\sqrt{n}}\bigg).
\end{align}
Adding and subtracting $C_k(t,t^\prime)$ in the integrand defining $\tilde{\theta}_{i,j}^*$, if follows that
\[ \tilde{\theta}_{i,j}^* = \begin{cases}
\langle X_i\otimes X_i-C_1,\varphi_j^*\otimes\varphi_j^*\rangle
+(1-\tau)\langle C_1-C_2,\varphi_j^*\otimes\varphi_j^*\rangle, \qquad i\leq k^*. \\
\langle X_i\otimes X_i-C_2,\varphi_j^*\otimes\varphi_j^*\rangle
+\tau\langle C_2-C_1,\varphi_j^*\otimes\varphi_j^*\rangle, \qquad\qquad\;\; i> k^*.
%
\end{cases}
\]
Therefore,
\[ {\bf \tilde{\Theta}}^*_i  = \begin{cases}
   {\bf \Theta}^{(1)}_i +  (1-\tau) ({\bm{\mu}}_1-{\bm{\mu}}_2),\qquad i \le k^*, \\
   {\bf \Theta}^{(2)}_i -\tau ({\bm{\mu}}_1-{\bm{\mu}}_2),\qquad\qquad\;\; i > k^*,
   \end{cases}
\]
where
$
{\bm{\mu}}_k = \langle C_k,\varphi_j^*\otimes\varphi_j^*\rangle
$.
Using the definition of $\bar{{\bf \Theta}}^*$ shows that
\begin{align*}
\bar{{\bf \Theta}}^*
&= \frac{\lfloor n \tau \rfloor }{n} \frac{1}{\lfloor n \tau \rfloor } \sum_{i=1}^{\lfloor n \tau \rfloor}  {\bf \tilde{\Theta}}^*_i + \frac{n - \lfloor n \tau \rfloor }{n} \frac{1}{n - \lfloor n \tau \rfloor } \sum_{i=\lfloor n \tau \rfloor+1}^{n}  {\bf \tilde{\Theta}}^*_i\\
&=  \frac{\lfloor n \tau \rfloor }{n} \frac{1}{\lfloor n \tau \rfloor } \sum_{i=1}^{\lfloor n \tau \rfloor}  {\bf \Theta}^{(1)}_i + \frac{n - \lfloor n \tau \rfloor }{n} \frac{1}{n - \lfloor n \tau \rfloor } \sum_{i=\lfloor n \tau \rfloor+1}^{n}  {\bf \Theta}^{(2)}_i \\
&\;\;\;\;\;\;\; +  \left(\frac{\lfloor n \tau \rfloor }{n}(1-\tau) -\frac{n - \lfloor n \tau \rfloor }{n}\tau \right)({\bm{\mu}}_1-{\bm{\mu}}_2).
\end{align*}
Therefore one may use Assumption \ref{alt-as} to show that
\begin{align}\label{theta-mean-conv}
\|\bar{{\bf \Theta}}^*\|_d = O_P\bigg(\frac{1}{\sqrt{n}}\bigg).
\end{align}
For $\ell \ge 1$, write
\begin{align*}
\tilde{{\bf \Gamma}}^*_{\ell,\theta} &= \frac{1}{n} \sum_{i=1}^{n-\ell }\big[\tilde{{\bf \Theta}}^*_i-\bar{{\bf \Theta}}^*\big]\big[\tilde{{\bf \Theta}}^*_{i+\ell}-\bar{{\bf \Theta}}^*\big]^\top  \\
&=\tau \frac{1}{n\tau} \sum_{i=1}^{\lfloor n \tau \rfloor -\ell }\big[\tilde{{\bf \Theta}}^*_i-\bar{{\bf \Theta}}^*\big]\big[\tilde{{\bf \Theta}}^*_{i+\ell}-\bar{{\bf \Theta}}^*\big]^\top \\
&\;\;\;\;\; + (1-\tau)\frac{1}{n(1-\tau)} \sum_{i=\lfloor n\tau \rfloor - \ell }^{ n -\ell }\big[\tilde{{\bf \Theta}}^*_i-\bar{{\bf \Theta}}^*\big]\big[\tilde{{\bf \Theta}}^*_{i+\ell}-\bar{{\bf \Theta}}^*\big]^\top \\
&= \tau \frac{1}{n\tau} \sum_{i=1}^{\lfloor n \tau \rfloor -\ell }\big[\tilde{{\bf \Theta}}^*_i-{\bm{\mu}}_1+{\bm{\mu}}_1-\bar{{\bf \Theta}}^*\big]\big[\tilde{{\bf \Theta}}^*_{i+\ell}-{\bm{\mu}}_1+{\bm{\mu}}_1- \bar{{\bf \Theta}}^*\big]^\top \\
&\;\;\;\;\; + (1-\tau)\frac{1}{n(1-\tau)} \sum_{i=\lfloor n\tau \rfloor - \ell }^{ n -\ell }\big[\tilde{{\bf \Theta}}^*_i-{\bm{\mu}}_2+{\bm{\mu}}_2-\bar{{\bf \Theta}}^*\big]\big[\tilde{{\bf \Theta}}^*_{i+\ell}-{\bm{\mu}}_2+{\bm{\mu}}_2-\bar{{\bf \Theta}}^*\big]^\top.
\end{align*}
Expanding the last line, using \eqref{theta-mean-conv} and Assumption \ref{alt-as}, it follows that
$$
\Big|\tilde{{\bf \Gamma}}^*_{\ell,\theta}  - \big[\tau{\bf \Gamma}^{(1)}_{\ell,\theta}+ (1-\tau){\bf \Gamma}^{(2)}_{\ell,\theta} + \tau(1-\tau)({\bm{\mu}}_1-  {\bm{\mu}}_2)({\bm{\mu}}_1-  {\bm{\mu}}_2)^\top\big] \Big|_F = O_P\bigg(\frac{1}{\sqrt{n}}\bigg).
$$
Since
$\Sigma_d^{(k)} = \sum_{\ell = -\infty}^{\infty}{\bf \Gamma}^{(k)}_{\ell,\theta},$
one obtains as in the proof of Theorem \ref{sig-cons-thm} that
$$
\big|\tilde{\Sigma}_{\star,d} - \Sigma_{\star,n}\big|_F = O_P\bigg(\frac{h}{\sqrt{n}}\bigg),
$$
where
$$
\Sigma_{\star,n} = \tau \Sigma_d^{(k)} + (1-\tau) \Sigma_d^{(k)} + \tau(1-\tau)({\bm{\mu}}_1-  {\bm{\mu}}_2)({\bm{\mu}}_1-  {\bm{\mu}}_2)^\top  \sum_{\ell = -\infty}^{\infty} w\bigg(\frac{\ell}h\bigg).
$$
The bounded support of $w$ gives $\sum_{\ell = -\infty}^{\infty} w(\ell/h)=O(h)$.
By Assumption \ref{alt-as} (c), the matrix $\Sigma_{\star,n}$ is strictly positive definite for all $n$. If $\chi_j$, $j=1,\ldots, d$ are the ordered (decreasing) eigenvalues of $\Sigma_{\star,n}$, then there exist positive constants $k_{10}$ and $k_{11}$ so that $\chi_d > k_{10} >0$ and $\chi_1 \le k_{11} h$. The largest eigenvalue of $\Sigma_{\star,n}^{-1}$ is then bounded, and hence so is $|\Sigma_{\star,n}^{-1}|_F$.  According to equation (26) of Henderson and Searle (1981) and the sub-multiplicative property of the Frobenius norm, this implies
\begin{align*}
|\tilde{\Sigma}_{\star,d}^{-1} - \Sigma_{\star,n}^{-1} |_F &= |\Sigma_{\star,n}^{-1} ( \hat{\Sigma}_d - \Sigma_{\star,n})(I +  \Sigma_{\star,n}^{-1}(\hat{\Sigma}_d - \Sigma_{\star,n})^{-1} \Sigma_{\star,n}^{-1}  |_F \\
&\le  k_{12}|\Sigma_{\star,n}^{-1} |_F^2 ||\hat{\Sigma}_d - \Sigma_{\star,n}|_F \\
&\le  k_{13}  |\hat{\Sigma}_d - \Sigma_{\star,n}|_F \\
&= O_P\bigg(\frac{h}{\sqrt{n}}\bigg),
\end{align*}
where $I$ is the identity matrix in $\mathbb{R}^{d\times d}$. One obtains similarly that $|\tilde{\Sigma}_{\star,d}^{-1} - \hat{\Sigma}_{d}^{-1} |_F=O_P(h/\sqrt{n})$. Combining this with \eqref{kap-aprox} and the fact that the largest eigenvalue of $\Sigma_{\star,n}^{-1}$ is bounded, and the smallest eigenvalue of $\Sigma_{\star,n}^{-1}$ is of the order $O(1/h)$, for a positive constant $k_{14}$,
\begin{align*}
{\bf \kappa}_n(\tau)^\top \hat{\Sigma}_d^{-1} {\bf \kappa}_n(\tau) &= n\tau^2 (1-\tau)^2  ({\bf \Lambda}_d^{(1)}-  {\bf \Lambda}_d^{(2)})^\top \Sigma_{\star,n}^{-1} ({\bf \Lambda}_d^{(1)}-  {\bf \Lambda}_d^{(2)}) + O_P(1) \\
 &\ge k_{14} (n/h)  \| {\bf \Lambda}_d^{(1)}-  {\bf \Lambda}_d^{(2)}\|_d^2 + O_P(1) \\
 & \stackrel{P}{\to} \infty,
\end{align*}
 as $n\to\infty$. This now implies part (a) in conjunction with \eqref{j-approx}. Restricting to the $j$th eigenvalue gives part (b). Part (c) follows along similar lines, and so details are omitted.
\end{proof}

\end{document}